\documentclass[lettersize,journal]{IEEEtran}
\usepackage{amsmath,amsfonts}
\usepackage{array}
\usepackage{textcomp}
\usepackage{stfloats}
\usepackage{url}
\usepackage{verbatim}
\usepackage{graphicx}
\usepackage{cite}
\usepackage[caption=false,font=normalsize,labelfont=sf,textfont=sf]{subfig}
\hyphenation{op-tical net-works semi-conduc-tor IEEE-Xplore}

\usepackage{amssymb,amsthm}
\usepackage{xcolor}
\usepackage{booktabs}
\usepackage{multirow}
\usepackage{multicol}
\usepackage[ruled,linesnumbered]{algorithm2e}
\usepackage[para,flushleft]{threeparttable}

\usepackage[hidelinks]{hyperref}

\newtheorem{definition}{Definition}
\newtheorem{theorem}{Theorem}
\newtheorem{lemma}{Lemma}
\SetFuncSty{text}

\definecolor{myred}{RGB}{242, 220, 218}
\definecolor{myblue}{RGB}{219, 238, 243}

\definecolor{lv}{RGB}{129, 178, 31}
\definecolor{danlan}{RGB}{171, 215, 236}
\definecolor{lan}{RGB}{51, 171, 193}
\definecolor{hong}{RGB}{245, 155, 123}
\definecolor{cheng}{RGB}{237, 136, 40}
\definecolor{danhong}{RGB}{249, 179, 173}

\begin{document}

\title{FastReChain: A Novel Bidirectional Model-Based Algorithm for Topology Engineering of OCS-Based Clusters}

\author{
  \IEEEauthorblockN{
    Zihan Zhu\IEEEauthorrefmark{8}$^*$,
    Xinchi Han\IEEEauthorrefmark{6}$^*$,
    Dongchao Wu\IEEEauthorrefmark{5},
    Zhanbang Zhang\IEEEauthorrefmark{5},
    Jian Yang\IEEEauthorrefmark{5},
    Shizhen Zhao\IEEEauthorrefmark{6},
    Xinbing Wang\IEEEauthorrefmark{6}
  }

  \IEEEauthorblockA{
    $^*$These authors should be considered co-first authors.\\
    \IEEEauthorrefmark{8} University of Electronic Science and Technology of China, Chengdu, China\\
    \IEEEauthorrefmark{6} Shanghai Jiao Tong University, Shanghai, China\\
    \IEEEauthorrefmark{5} Huawei, Dongguan, China
  }

  \IEEEauthorblockA{
    2021080911004@std.uestc.edu.cn,
    hanxinchi@sjtu.edu.cn,
    wudongchao@huawei.com,
    zhangzhanbang1@huawei.com,
    yangjian227@huawei.com,
    shizhenzhao@sjtu.edu.cn,
    xwang8@sjtu.edu.cn
  }
}

\markboth{Transactons on Networking,~Vol.~XX, No.~XX, February~2026}%
{Shell \MakeLowercase{\textit{et al.}}: A Sample Article Using IEEEtran.cls for IEEE Journals}


\maketitle

\begin{abstract}
Optical Circuit Switching (OCS) technology is increasingly being adopted in data centers due to its advantages of low power consumption and low technology refresh costs. Unlike electrical packet switches, OCS provides programmable bandwidth for directly connected devices by configuring the mapping relationships of internal ports. Thus, how to calculate these internal port mapping relationships, i.e., Topology Engineering (ToE), is one of the key designs of OCS-based clusters.

Current deployments usually design ToE algorithms by solving Integer Linear Programming (ILP) models, with the aim of minimizing modifications to links occupied by running tasks as much as possible. However, ILP-based ToE algorithms may incur excessive runtime overhead in large-scale clusters. Some existing ToE algorithms convert the ILP model into a Minimum-Cost Flow model through greedy construction, yet such greedy strategies may increase the number of affected links during the OCS reconfiguration process. To solve the aforementioned problems, we propose a novel bidirectional modeling approach, along with a corresponding \emph{FastReChain} algorithm in this paper. We verify the superiority of this algorithm through simulation experiments based on real-trace data.

\end{abstract}

\begin{IEEEkeywords}
optical network, topology engineering.
\end{IEEEkeywords}


\section{Introduction}

\IEEEPARstart{I}{n} recent years, the growing computational demands of applications such as Data Center Networks (DCNs) and large language model (LLM) training have driven the adoption of Optical Circuit Switches (OCSes) to replace core-layer electrical switches. This shift offers benefits including lower technology refresh costs, improved scalability, and reduced operational power consumption, as demonstrated in systems such as Jupiter Evolving, TPUv4, and Google A3~\cite{poutievski2022jupiter,zu2024resiliency,GoogleA3Supercomputers,2021Gemini}. Leveraging the reconfigurable link flexibility of OCSes, OCS-based clusters typically compute the required link provisioning between directly connected electrical switches, which is \emph{a.k.a.} \emph{logical topology}, according to traffic patterns. Once the logical topology is determined, we need to further determine the port matching relationships within OCSes to produce the circuit-level links, thereby providing the bandwidth required by the logical topology for directly connected devices, which is \emph{a.k.a.} \emph{topology engineering (ToE)}. Therefore, the design of the ToE algorithm is one of the key aspects of OCS-based cluster. Two key metrics can be used to evaluate the quality of a ToE algorithm:


\textsf{\bfseries Rewiring Ratio.} \emph{Rewiring Ratio} quantifies the number of circuit paths occupied by current running tasks that must be reconfigured when performing ToE. Since OCS reconfiguration may disrupt communication of currently running tasks and reduce its efficiency \cite{han2024lumoscore}, it is desirable to minimize Rewiring Ratio for better communication throughput. The \emph{Rewiring Ratio} is quantified and analyzed in detail in Section~\ref{evalsetup}.

\textsf{\bfseries Solving Overhead.} \emph{Solving Overhead} is defined as the execution latency of the ToE algorithm. For OCS-based Machine Learning (ML) clusters such as TopoOpt and TPUv4, task-level logical topology reconfiguration may be required to meet the communication demands of ML training jobs. Excessively high solving overhead of the ToE algorithm can degrade the resource utilization efficiency of the cluster.

Previous works fail to achieve optimality in terms of the aforementioned features. Commercially deployed OCS-based clusters typically design ToE algorithms based on Integer Linear Programming (ILP) models \cite{poutievski2022jupiter,zu2024resiliency}; while ILP achieves optimal rewiring ratios, it incurs excessively high solving overhead in large-scale clusters. To mitigate this overhead, Google and Zhao et al. designed ToE algorithms based on the Minimum-Cost Flow (MCF) model \cite{zhao2019minimal,han2024lumoscore,zhang2023reducing}. However, this approach has two drawbacks: first, the MCF model itself is difficult to directly express the layer 2 (L2) symmetry constraint\footnote{A constraint that requires OCSes to provide bidirectional links for directly connected devices; otherwise, the L2 protocol cannot work normally \cite{han2024lumoscore,EthernetAddressResolution1982}}. Previous studies realized the satisfaction of L2 constraints by greedily reducing the problem, along with the logical topology to non-L2 constrained ones \cite{han2024lumoscore}, but this reduction itself may affect the minimization of the rewiring ratio as we will discuss in Section~\ref{basicmodel}; second, in such methods, the number of connections between every OCS and directly connected devices needs to be a fixed value, which may limit the application of these methods in practical scenarios.


In this paper, we first propose a bidirectional model to address the deficiencies of previous MCF-based algorithm modeling, then an instance of the \emph{FastReChain} algorithm is presented based on this bidirectional model. In our evaluation, under low network port utilization conditions (typical for networks where logical topologies are uniformly scaled from traffic patterns, accompanied by high traffic bursts and hotspots), our algorithm achieved improvements of up to 99.9\% in solving time and up to 98.9\% in rewiring ratio over the MCF-based ToE algorithms. When network port utilization was set at 100\%, our algorithm still achieved improvements of up to 97.4\% in solving time and 56.2\% in rewiring ratio over the MCF-based ToE algorithms. These results demonstrate the superior performance of our algorithm, whose feasibility is guaranteed by our modeling design and solid theoretical analysis.



\section{Background}\label{motiv}
\subsection{Basic Model}\label{basicmodel}
We begin by formulating an ILP model to characterize the fundamental concepts of ToE algorithms.



In this work, we focus on a canonical two-layer architecture, where a single tier of Top-of-Rack (ToR) switches is interconnected through a single tier of OCSes. For ease of description, \textbf{we define a \emph{link} as the entirety of all \emph{``physical links''} between a ToR switch and an OCS}; each link has a capacity, and the condition for carrying connections is that the number of connections does not exceed the capacity of the link. We denote the link capacity matrix between OCSes and ToR switches as \( C\in\mathbb{Z}^{n\times m}_{\ge 0} \), where \( C_{i,j} \) represents the capacity of the link between the \( i \)-th OCS and the \( j \)-th ToR switch, \( n \) is the total number of OCSes, and \( m \) is the total number of ToR switches in the cluster.

Formally, we represent the logical topology using a non-negative integer matrix \( D\in\mathbb{Z}^{m\times m}_{\ge0} \), where entry \( D_{j,k} \) denotes the number of dedicated connections between the \( j \)-th and \( k \)-th ToR switch, subject to the symmetry constraint \( D_{j,k}=D_{k,j} \), with \( m \) being the total number of ToR switches in the cluster. We use \( X\in\mathbb{Z}^{n\times m\times m}_{\ge0} \) to denote the current OCS configuration, where \( X_{i,j,k} \) represents the number of connections established between the \( j \)-th and \( k \)-th ToR switches via the \( i \)-th OCS, while also satisfying the symmetry constraint \( X_{i,j,k}=X_{i,k,j} \). Similarly, we denote the next OCS configuration to be obtained via ToE as \( Y \), where \( Y_{i,j,k} \) represents the number of connections established between the \( j \)-th and \( k \)-th ToR switches via the \( i \)-th OCS. To satisfy the layer 2 (L2) symmetry constraint, the OCS must establish a bidirectional link for directly connected devices, i.e., \( Y_{i,j,k}=Y_{i,k,j} \). A simple example of the structure of the model is illustrated in Fig.~\ref{fig1}.

A configuration \( Y \) is feasible in the network if it satisfies the following constraints:
\begin{subequations}
\begin{align}
\forall i,j,&\sum_{k=0}^{m-1}Y_{i,j,k}\le C_{i,j}\\
\forall i,j,&\sum_{k=0}^{m-1}Y_{i,k,j}\le C_{i,j}
\end{align}
\end{subequations}

Moreover, the condition for \( Y \) to meet the logical topology requirement \( D \) can be formulated as:
\begin{equation}
\forall j,k,\sum_{i=0}^{n-1}Y_{i,j,k}\ge D_{j,k}
\end{equation}

To reduce the network convergence time, we focus on minimizing the \textit{number of rewirings (NR)}, which is formally defined as:
\begin{equation}\label{eq:nr}
\textrm{NR}(X,Y)=\sum_{i=0}^{n-1}\sum_{j=0}^{m-1}\sum_{k=0}^{m-1}\left|Y_{i,j,k}-X_{i,j,k}\right|
\end{equation}

It is important to note that the model we propose differs from previous works in two key aspects.

First, in previous studies \cite{2021Gemini,zhao2019minimal,han2024lumoscore,zhang2023reducing}, all ToR switches are uniformly connected to all OCSes, meaning that the number of links between each ToR switch and all OCSes is a fixed value; however, our model does not rely on this assumption.

Second, previous models \cite{zhao2019minimal} assume that OCSes can establish arbitrary connections between ToR switches. This assumption is impractical in real-world clusters, because ToE in actual deployments \cite{poutievski2022jupiter,zu2024resiliency} must satisfy a symmetry constraint: ToR switches need to establish bidirectional connections through OCSes. If this constraint is not met, some L2 protocols will fail to function properly \cite{han2024lumoscore,EthernetAddressResolution1982}.

We define our modeling approach incorporating this symmetry constraint as the \textbf{bidirectional} model, while referring to the models established in previous works as the \textbf{traditional} model, where the symmetry constraint is not imposed, i.e., \(D_{j,k} \ne D_{j,k}\), \( X_{i,j,k} \ne X_{i,k,j} \), and \( Y_{i,j,k} \ne Y_{i,k,j} \) may occur. Additionally, each ToR switch can be abstracted as a pair of input switch and output switch, with the input switches containing up-links and the output switches containing down-links. A simple example of the traditional model is illustrated in Fig.~\ref{fig2}. Although previous studies  can greedily reduce the original problem, along with the logical topology, to ones based on the traditional model; solve ToE using algorithms designed for the traditional model; and then, map the solutions back to the original problem \cite{han2024lumoscore}, this greedy strategy itself incurs certain losses in minimizing the objective function formulated in Eq.~\eqref{eq:nr}. Thus, unlike the strategies adopted in previous works, \textbf{we attempt to design ToE algorithms directly based on the bidirectional model in this paper.}

\begin{figure}[!t]
\centerline{\includegraphics{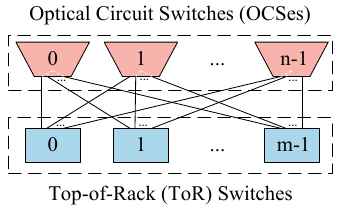}}
\caption{A simple example of the structure of the bidirectional model.}
\label{fig1}
\end{figure}

\begin{figure}
\centerline{\includegraphics{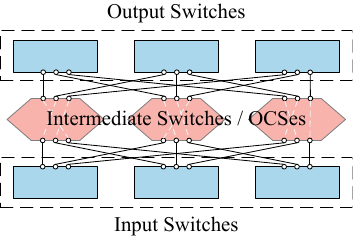}}
\caption{A simple example of the traditional model.}
\label{fig2}
\end{figure}

For convenience, we introduce some notations in the bidirectional model:
\begin{itemize}
    \item $T_i,i=0,1,\ldots,n-1$ represents the $i$-th OCS;
    \item $L_j,j=0,1,\ldots,m-1$ represents the $j$-th ToR switch;
    \item $(T_i,L_j)$ represents the link between $T_i$ and $L_j$; 
    \item $(T_i,L_j,L_k)$ represents a connection between $L_j$ and $L_k$ established by the forwarding of $T_i$;
    \item $E(X)_{j,k}$ represents the number of connections between $L_j$ and $L_k$ in scheme $X$, i.e. $E(X)_{j,k}=\sum_iX_{i,j,k}$.
\end{itemize}

All notations used in the bidirectional model are listed in Table~\ref{tab1}.

\begin{table}[!t]
\caption{Notations in the bidirectional model.}\label{tab1}
\begin{center}
\setlength{\tabcolsep}{1.0mm}
\begin{tabular}{c|c}
\toprule
$m$&The number of ToR switches\\
$n$&The number of OCSes\\
\midrule
$L_j$&The $j$-th ToR switch, $0\le j\le m-1$\\
$T_i$&the $i$-th OCS, $0\le i\le n-1$\\
\midrule
$(T_i,L_j)$&The link between $T_i$ and $L_j$\\
\midrule
\multirow{2}{*}{$(T_i,L_j,L_k)$}&A connection between $L_j$ and $L_k$ that\\
&passes through $T_i$\\
\midrule
$C_{i,j}$&The link capacity between $T_i$ and $L_j$\\
\midrule
\multirow{2}{*}{$D_{j,k}$}&The number of connections required\\
&between $L_j$ and $L_k$\\
\midrule
\multirow{4}{*}{$X_{i,j,k},Y_{i,j,k}$}&The number of connections between\\
&$L_j$ and $L_k$ that pass through $T_i$\\
&in the original and the new network\\
&routing scheme, respectively\\
\midrule
\multirow{2}{*}{$E(X)_{j,k}$}&The number of connections between\\
&$L_j$ and $L_k$ in scheme $X$\\
\bottomrule
\end{tabular}
\end{center}
\end{table}

\noindent \textbf{Remark.} Our proposed model is not limited to OCSes, but can be extended to encompass any scenario with essentially the same mathematical model (both the bidirectional and the traditional model). We adopt a flat topology for the purpose of illustrating our approach, owing to its simplicity and ease of comprehension. Nevertheless, the flat models serve as the core foundation for supporting various network topologies and can be extended to multi-layer Clos topologies in a relatively trivial way. For example, in a 2-layer Clos topology, we can adopt a 2-step solution method that solves the outer layer first and then the inner layer, where exactly the same algorithm is called in both steps.

\subsection{Analyzing ToE Optimality}

OCS is a device that directly routes optical signals without processing data packets. Therefore, establishing an optical circuit-level path for directly connected devices through OCS is equivalent to adding direct links between these devices. When the logical topology, i.e., the bandwidth requirements between directly connected devices, is determined, OCS can provide programmable bandwidth for directly connected devices through ToE. Therefore, the design of the ToE algorithm is a key aspect of OCS-based cluster design. We have the following metrics that can be used to evaluate the solution quality of the ToE algorithm.


\textbf{Solving Overhead: }The solution time of ToE may have an impact on cluster performance. For OCS-based GPU clusters such as TopoOpt \cite{wang2022topoopt} and TPUv4 \cite{zu2024resiliency}, the logical topology may need task-level updates to meet bandwidth requirements, and correspondingly, the ToE algorithm is also performed at task-level. According to public ML traces \cite{hu2024characterizationlargelanguagemodel,hu2021characterization}, in actual GPU clusters, although ML training tasks themselves may last for hours or even longer, users may terminate tasks early for reasons such as parameter tuning, which results in an average task submission interval of less than five minutes \cite{heliosdata2021,AcmeTrace2024,han2024lumoscore}. This means that if we adopt simple ILP-based algorithms, the solution overhead may become a performance bottleneck when dealing with large-scale clusters. To evaluate the impact of solving overhead on ML cluster efficiency, we conducted simulation experiments under two different cluster scales using public datasets \cite{heliosdata2021} and the RapidAISim simulator \cite{han2024lumoscore}. The clusters adopted a leaf-spine-OCS architecture \cite{GoogleA3Supercomputers}, where the hierarchical structure of leaf and spine switches forms a Pod, and multiple Pods are interconnected via OCSes. Each pod is equipped with 128 GPUs; for a cluster consisting of 64 pods, the ILP-based strategy increases the average job running time (JRT, defined as the time from job initiation to job completion) by 6.7\%, with a maximum increase of 45.46\% in JRT compared with MCF-based algorithms. This performance degradation is primarily attributed to the excessively high solving latency of the ILP-based ToE problems. In addition, the job completion time (JCT, defined as the time from job submission to job completion) experiences a maximum increase of 38.16\%, which can be explained by modeling the cluster as a queueing system. According to queuing theory \cite{kingman1962some}, an increase in the average JRT is likely to induce a substantial increase in job queuing time, which further underscores the critical importance of minimizing the solving overhead of ToE algorithms.


\textbf{Rewiring Ratio: }OCS reconfiguration causes re-pairing of optical modules in directly connected devices, which leads to BGP re-convergence in actual production environments. As shown in Table~\ref{tbl:bgp_conv_ratio}, we evaluated BGP convergence speed under different rewiring ratios, demonstrating that frequent OCS reconfiguration can degrade task communication efficiency. As shown in Fig.~\ref{reconfiguration_cost}, we conduct real-world experiments to further validate the impact of OCS reconfiguration on task performance. Specifically, we conducted experiments by running a LLaMA-7B workload on a cluster containing 16 GPUs and 2 servers, where half of the links between two servers were added or removed; we then recorded the epoch-level training latency. We observed a significant latency surge (from 1.105 s to 2.499 s) in the corresponding training epoch during OCS reconfiguration, which is attributable to BGP reconvergence and reconfiguration-induced packet loss. While sophisticated routing strategies (e.g., diverting in-flight traffic to alternate paths prior to OCS reconfiguration to avoid packet loss) can mitigate the performance degradation induced by OCS reconfiguration, such strategies incur non-trivial additional development and implementation overhead, and still cannot fully eliminate the adverse impact of OCS reconfiguration on ML training throughput. Table~\ref{tbl:reconfig_freq} further validates that minimizing the rewiring ratio reduces the adverse impact of OCS reconfiguration on ML training throughput.

\begin{table}[!t]
	\caption{The Impact of OCS Reconfiguration on BGP Convergence Time Under Different Network Scales}\label{tbl:bgp_conv_ratio}
	\centering 
 \setlength{\tabcolsep}{1.0mm}
	\begin{tabular}{c|ccccc} 
	    \toprule  
	    Rewiring Ratio & 12.5\% & 25\% & 50\% \\
	    \midrule  
		\emph{Avg. BGP convergence time (s)}& 2.19 & 2.38 & 2.6\\
		\bottomrule 
	\end{tabular}
	
\end{table}

\begin{table}[!t]
	\caption{The impact of different reconfiguration frequencies for a llama(7b) task, \emph{Nan} means no recofiguration during training.}\label{tbl:reconfig_freq}
	\centering 
 \setlength{\tabcolsep}{1.0mm}
	\begin{tabular}{c|ccccc} 
	    \toprule  
	    Reconfiguration Interval (s)&30&60&90&\emph{Nan}\\
	    \midrule
		\emph{Avg. overhead per epoch (ms)}& 1175.4 & 1112.8 & 1103.2&1103.0\\
        
		\bottomrule
	\end{tabular}
	
\end{table}

\begin{figure}[!t]
    \begin{minipage}[b]{0.48\linewidth}
        \centering
        \includegraphics[width=\linewidth]{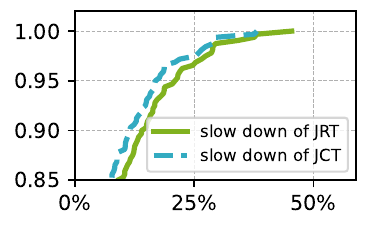}
        \caption{The CDFs of the slowdown ratios for both JRT and JCT under ILP-based ToE algorithms demonstrate that minimizing the solving overhead is critical for OCS-based ML clusters.}
        \label{solving_overhead}
    \end{minipage}\hfill
    \begin{minipage}[b]{0.48\linewidth}
        \centering
        \includegraphics[width=\linewidth]{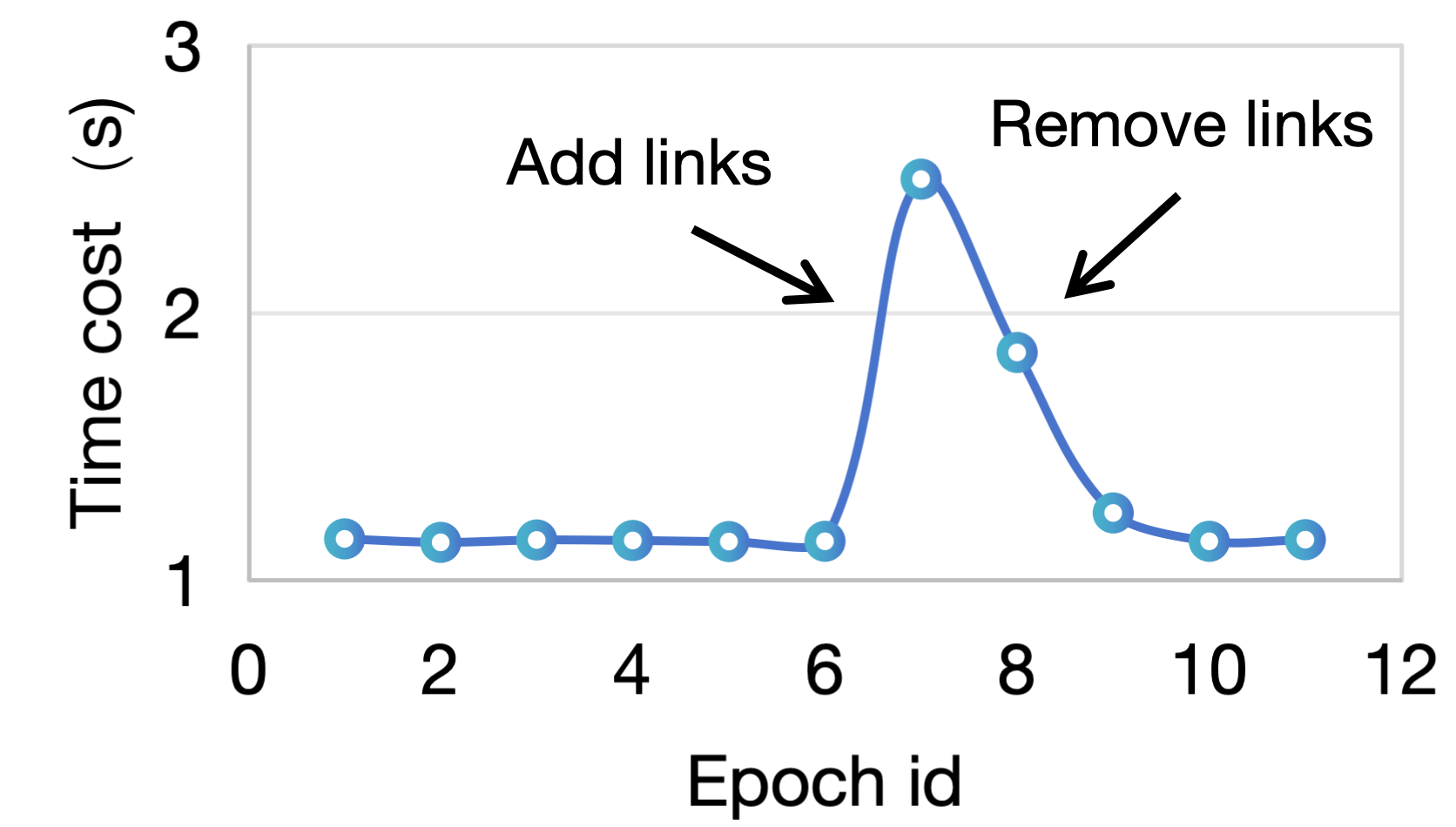}
        \caption{OCS reconfiguration can cause packet loss, resulting in a significant reduction in ML training throughput for a specific time window, making it essential to minimize the rewiring ratio.}
        \label{reconfiguration_cost}
    \end{minipage}
\end{figure}

In this paper, we aim to develop a ToE algorithm that achieves both an optimal Rewiring Ratio and minimal solving overhead.

\subsection{Current ToE Algorithm Designs}\label{curdesign}
Several ToE algorithms have been proposed in the existing literature. However, these algorithms all have their own shortcomings, resulting in a lack of suitable algorithms in certain scenarios, and there is still room for improvement in the overall capabilities of the algorithms. 

The most straightforward approach involves solving the ILP model defined in Section~\ref{basicmodel} using optimization solvers such as Gurobi \cite{gurobi}, which has been implemented in deployments such as Jupiter Evolving \cite{poutievski2022jupiter} and TPUv4 \cite{zu2024resiliency}. While this method theoretically yields a strictly optimal solution in terms of Rewiring Ratio, it becomes excessively time-consuming (taking hours or even days) in large-scale clusters, potentially degrading performance for OCS-based GPU clusters requiring task-level ToE execution.

To avoid the high solving overhead caused by ILP-based algorithms, Google and Zhao et al. \cite{zhao2019minimal,han2024lumoscore,zhang2023reducing} proposed solving the ToE problem by building MCF models. While these methods offer polynomial-time complexity, they still exhibit key limitations:

\textcircled{1} Regarding the rewiring ratio, these approaches adopt a divide-and-conquer strategy, splitting the original problem into multiple MCF subproblems. When the cluster contains more than two OCS devices, these methods \textbf{cannot guarantee} \emph{minimum rewiring ratios}, and their effectiveness in minimizing the rewiring ratio diminishes as the number of OCS devices increases. An example of a suboptimal solution produced by the divide-and-conquer strategy is attached in Appendix~\ref{biploss}. Additionally, these algorithms are designed based on the traditional model. Although strategies from works such as \cite{han2024lumoscore} and Sections~\ref{existrc}~and~\ref{modeladap} ensure that these MCF-based methods can find feasible solutions, this approach further increases the rewiring ratio. We further demonstrate the losses incurred by this greedy strategy in optimizing the objective function in Fig.\ref{figMapping}.

\begin{figure}[!t]
    \centering

    \includegraphics[width=0.8\linewidth]{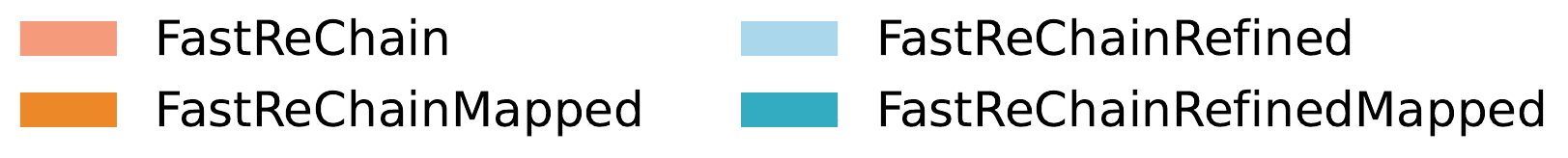}

    \begin{tabular}{@{}c@{}c@{}c@{}}
    
    \subfloat[\(n=128,c=4\)]{
    \includegraphics[width=0.3\linewidth]{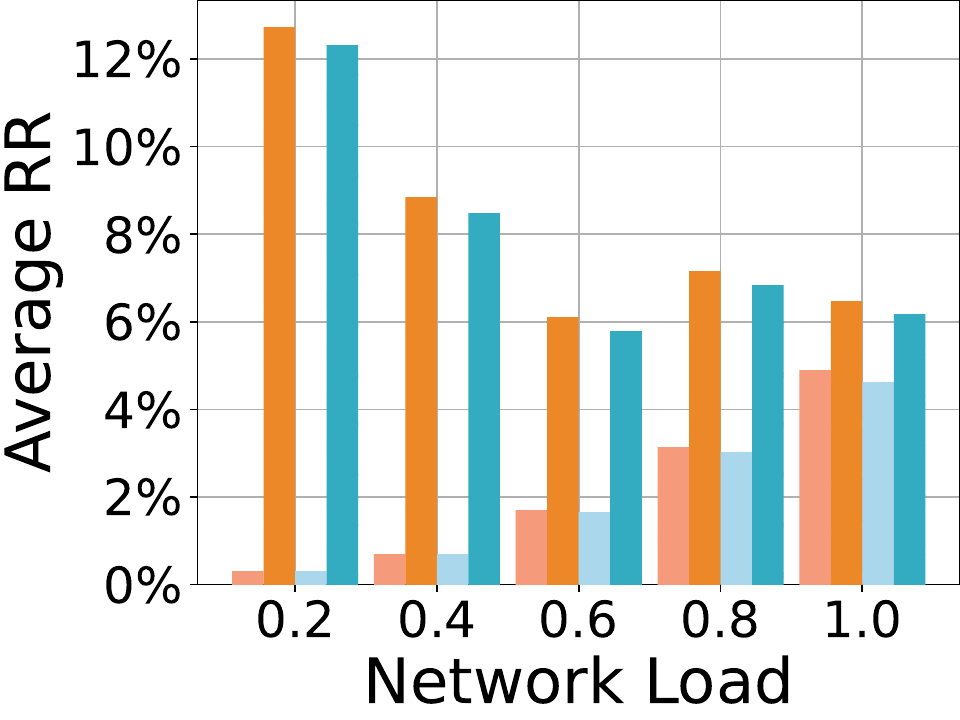}
    }
    &
    \subfloat[\(n=128,c=8\)]{
    \includegraphics[width=0.3\linewidth]{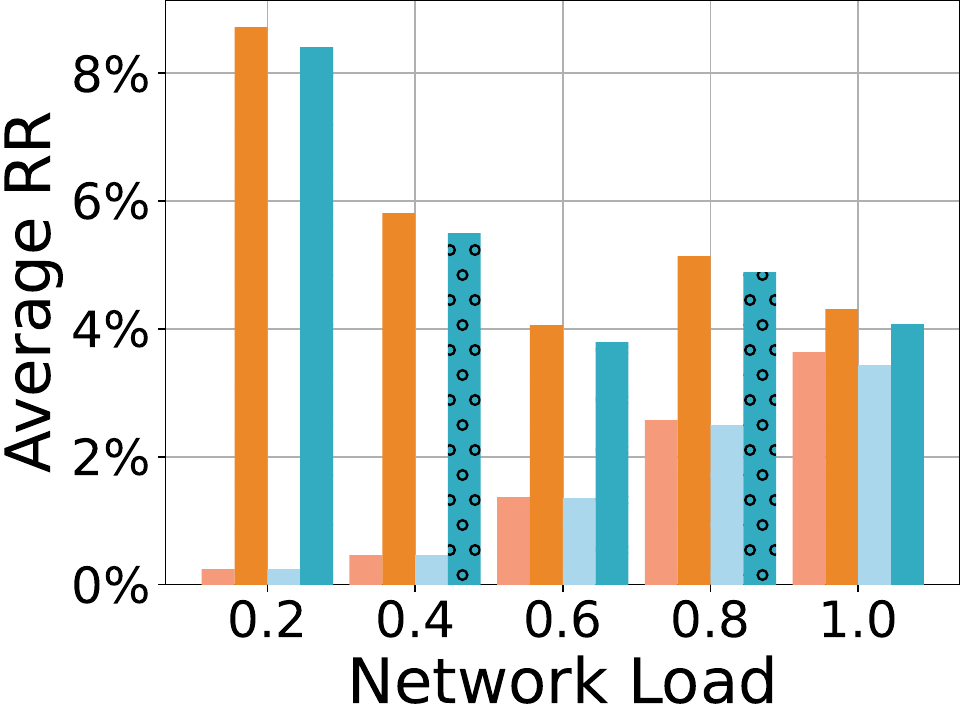}
    }
    &
    \subfloat[\(n=128,c=16\)]{
    \includegraphics[width=0.3\linewidth]{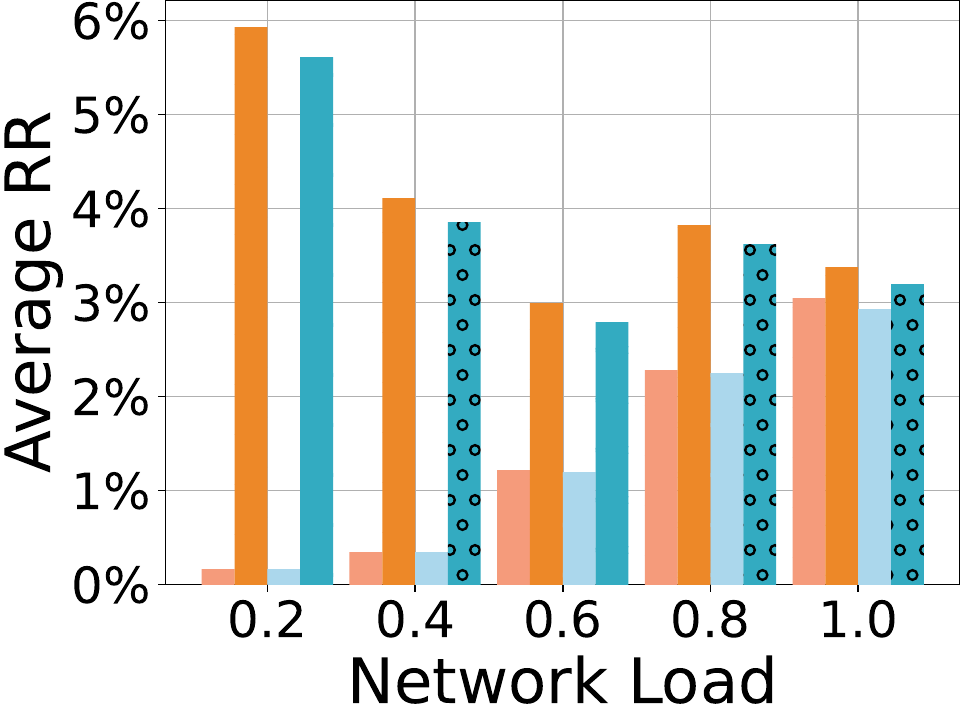}
    }

    \\
    
    \subfloat[\(n=256,c=4\)]{
    \includegraphics[width=0.3\linewidth]{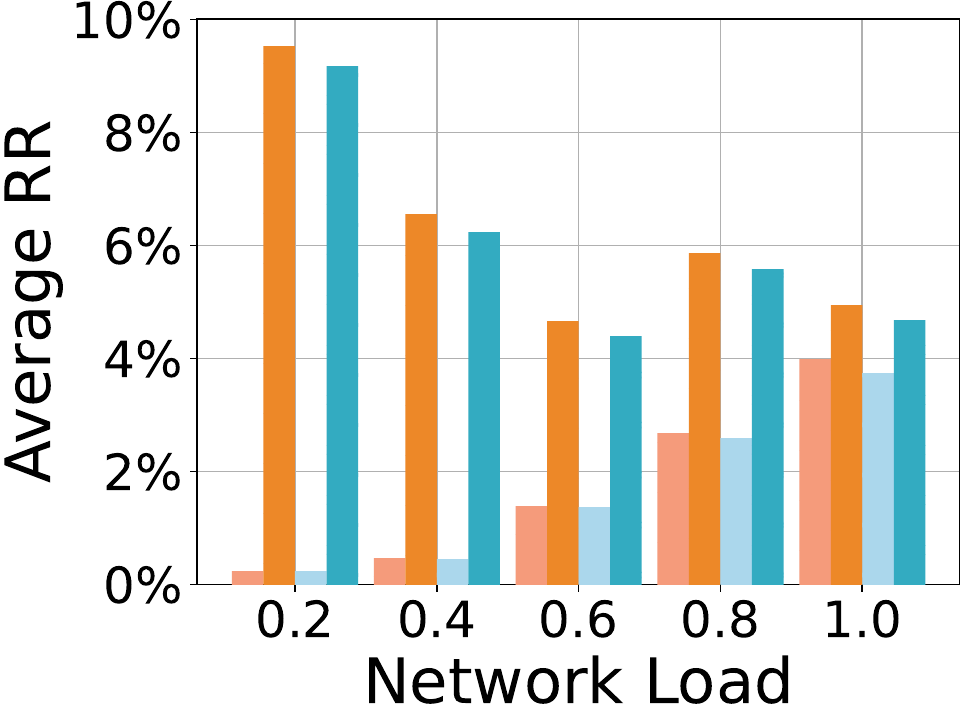}
    }
    &
    \subfloat[\(n=256,c=8\)]{
    \includegraphics[width=0.3\linewidth]{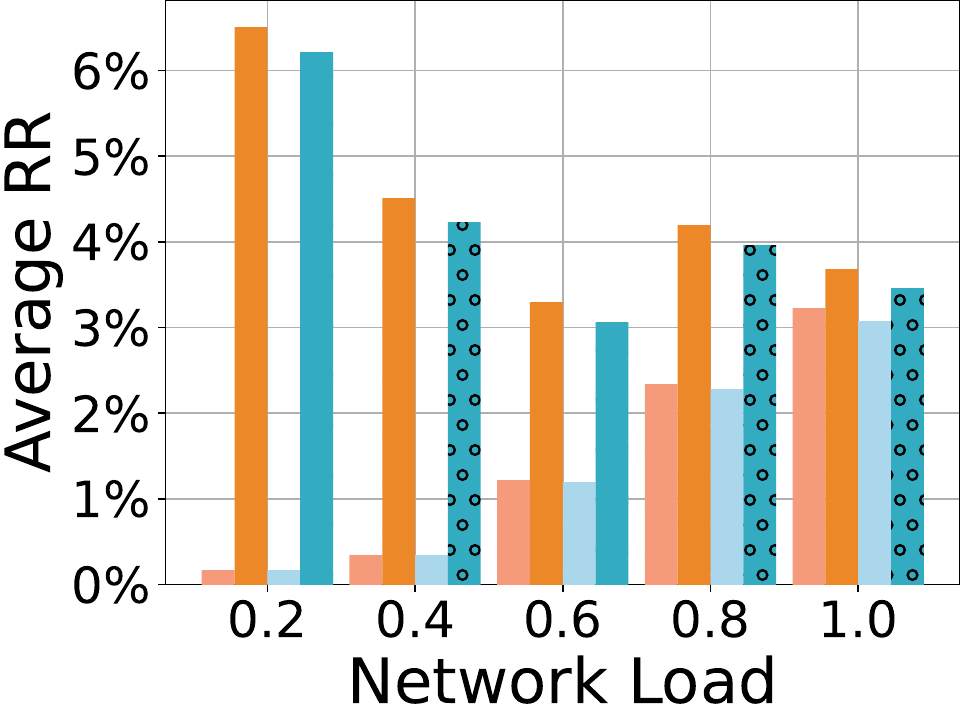}
    }
    &
    \subfloat[\(n=256,c=16\)]{
    \includegraphics[width=0.3\linewidth]{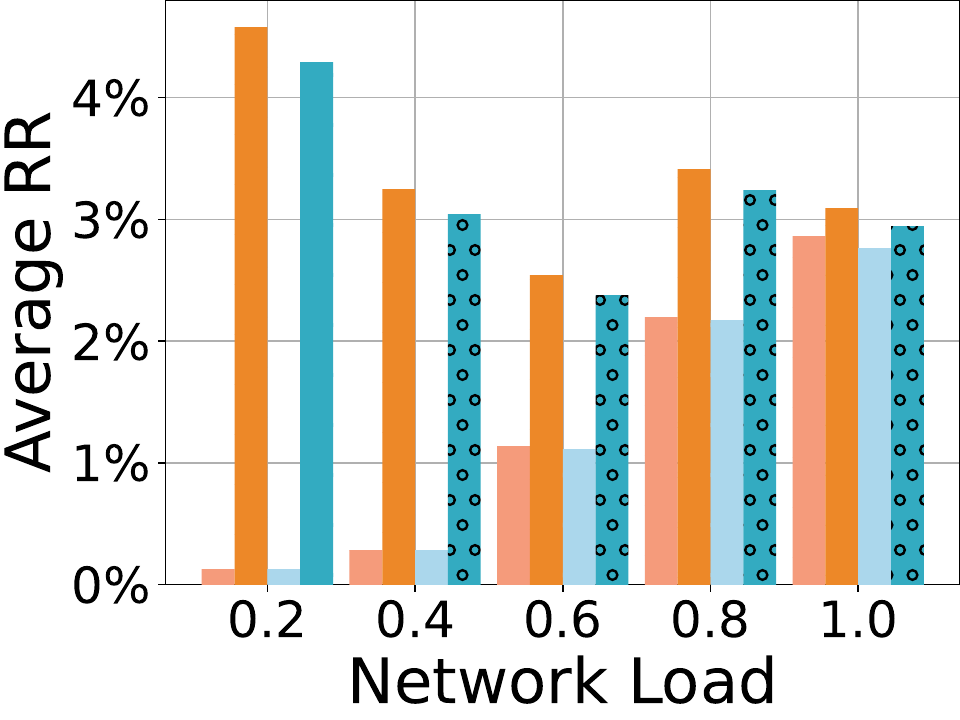}
    }

    \\
    
    \subfloat[\(n=384,c=4\)]{
    \includegraphics[width=0.3\linewidth]{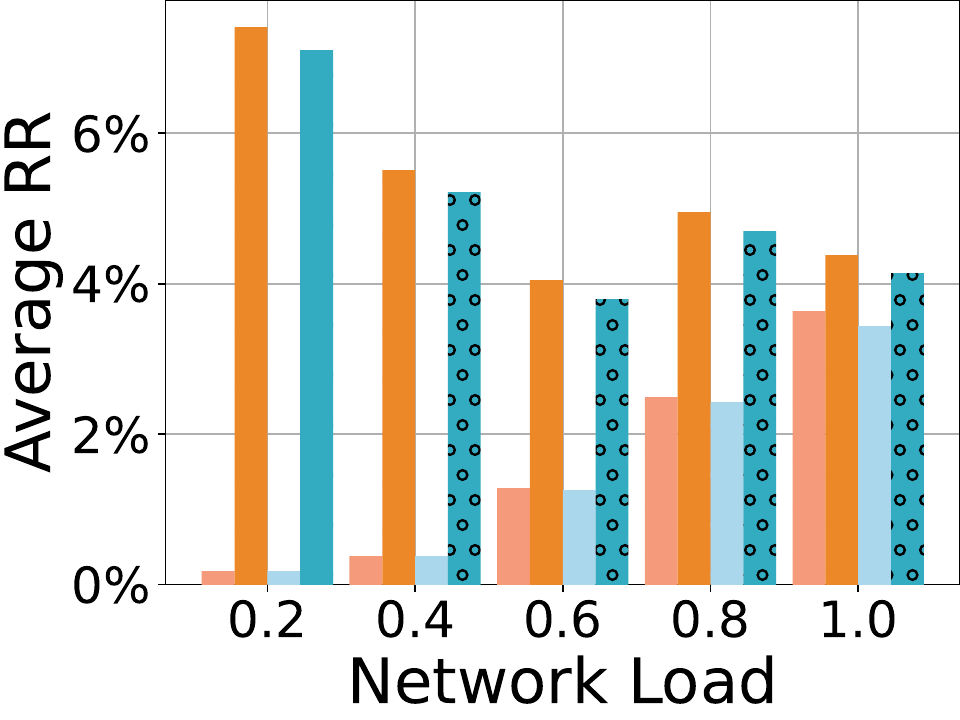}
    }
    &
    \subfloat[\(n=384,c=8\)]{
    \includegraphics[width=0.3\linewidth]{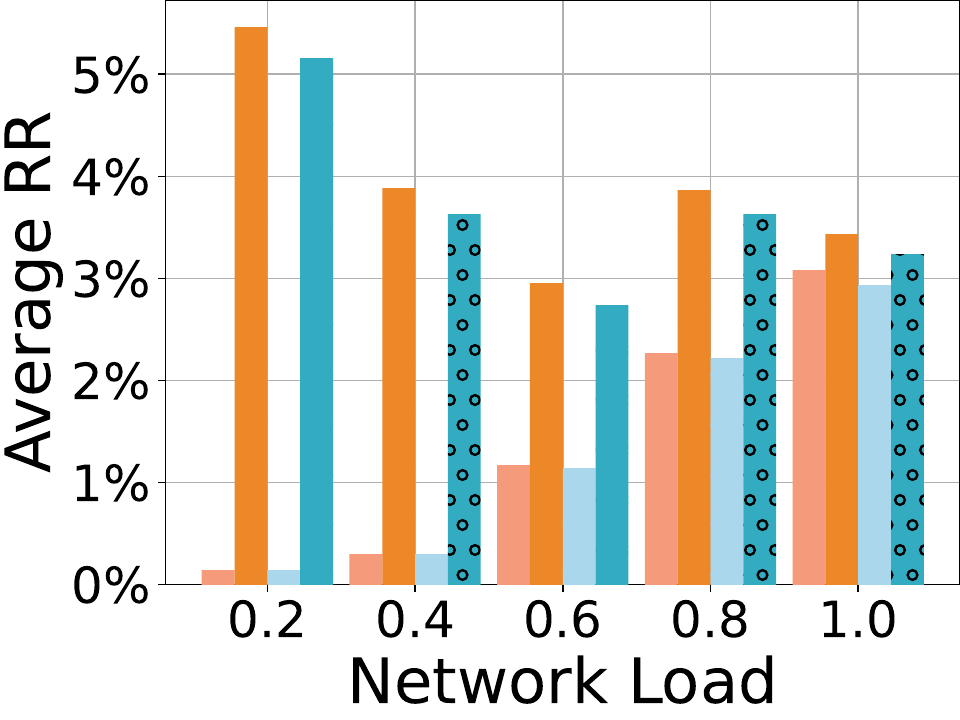}
    }
    &
    \subfloat[\(n=384,c=16\)]{
    \includegraphics[width=0.3\linewidth]{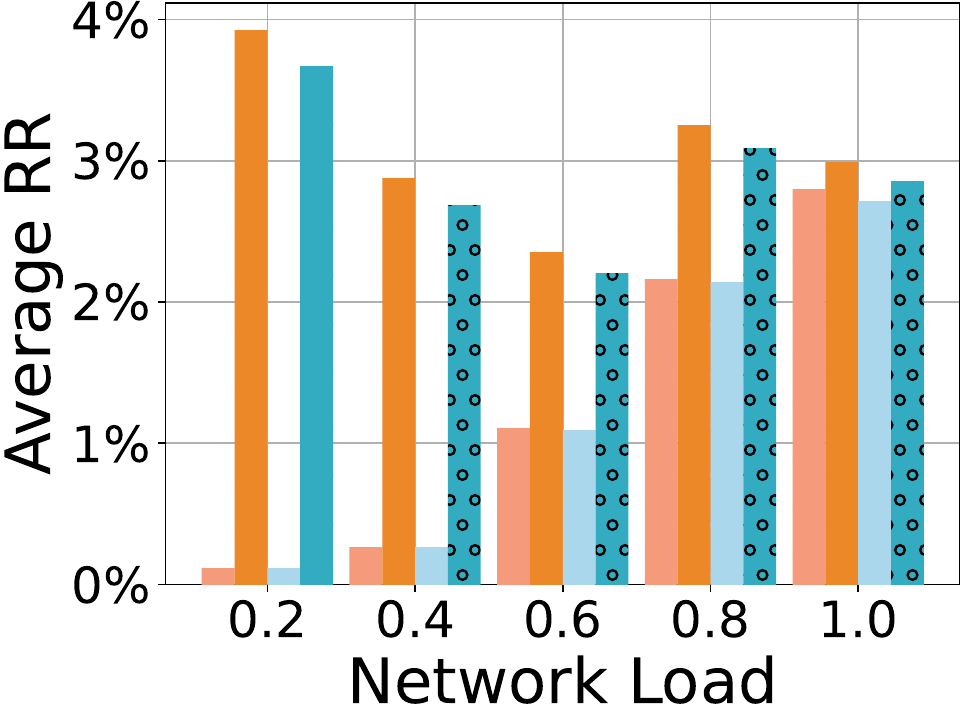}
    }

    \end{tabular}
    
    \caption{Mapping loss quantified in the bidirectional model. Instances labeled with the ``mapped'' suffix are versions that use a bidirectional model to traditional model conversion.}\label{figMapping}
\end{figure}

\textcircled{2} In large-scale networks, the solving overhead of these MCF-based methods requires further optimization. These methods construct MCF models using cluster information including switch and OCS device counts, but fail to account for the magnitude of logical topology changes. This means even minor logical topology alterations require solution time on the same order of magnitude as significant changes. Consequently, MCF algorithm solution times for ToE problems in large-scale networks remain suboptimal and need optimization. 


If we start from the assumption that the traffic changes are not significant, then a straightforward idea is to do modifications directly in the original port matching scheme. We are primarily concerned with the additions compared to the original logical topology, so we begin by considering how to add these circuit paths one by one, thus forming the starting point of our algorithm.

\section{The Replacement Chain Concept}\label{rcconcept}

\subsection{Replacement Chain}\label{formrc}
First, we introduce some basic definitions about replacement chain and the algorithm.

\begin{definition}
We call the connections between $L_j$ and $L_k$ \textbf{redundant} in a scheme $X$, if and only if $E(X)_{j,k}>D_{j,k}$.
\end{definition}

\begin{definition}
A link $(T_i,L_j)$ is \textbf{explicitly available} in a scheme $X$, if and only if the link is not fully used, i.e. $\sum_kX_{i,j,k}<C_{i,j}$;

A link $(T_i,L_j)$ is \textbf{implicitly available} in a scheme $X$, if and only if the link is fully used, i.e. $\sum_kX_{i,j,k}=C_{i,j}$, and there exist some $L_k$ such that $X_{i,j,k}>0$ and the connections between $L_j$ and $L_k$ are redundant;

A link is \textbf{available} if it is explicitly or implicitly available.
\end{definition}

When a link is implicitly available, we can make it explicitly available by removing one of the corresponding redundant connections.

\begin{definition}
We define two types of atomic modifications as follows:
\begin{itemize}
\item $\mathrm{Add}(i,j,k)$: add one connection between $L_j$ and $L_k$ established through of $T_i$. This is equivalent to adding $1$ to $X_{i,j,k}$ and $X_{i,k,j}$ if the operation is carried out in scheme $X$.
\item $\mathrm{Remove}(i,j,k)$: Remove one connection between $L_j$ and $L_k$ established through $T_i$. This is equivalent to reducing $1$ from $X_{i,j,k}$ and $X_{i,k,j}$ if the operation is carried out in scheme $X$.
\end{itemize}
\end{definition}

In order to form a replacement chain, we start from considering adding a new connection without affecting the rest of the logical topology. We pick a pair of ToR switches $L_{j_0}$ and $L_{k_0}$ such that $E(X)_{j_{0},k_{0}}<D_{j_{0},k_{0}}$, and consider how to schedule a connection between $L_{j_0}$ and $L_{k_0}$ in scheme $X$:
\begin{itemize}
    \item If for all $i$, $(T_i,L_{j_0})$ and $(T_i,L_{k_0})$ are all unavailable, then the connection cannot be added without reducing the number of non-redundant connections;
    \item  Otherwise, we choose an OCS $T_{i_0}$ and try to add $(T_{i_0},L_{j_0},L_{k_0})$:
    \begin{enumerate}
        \item If $(T_{i_0},L_{j_0})$ and $(T_{i_0},L_{k_0})$ are all available, we can comfortably add the connection without reducing the number of non-redundant connections;
        \item If one of them, say $(T_{i_0},L_{j_0})$ is available and $(T_{i_0},L_{k_0})$ is not, then we have to permute some connection $(T_{i_0},L_{j_1},L_{k_0})$ to make $(T_{i_0},L_{k_0})$ available;
        \item If both links are not available, then we have to permute two connections and reduce to adding two connections.
    \end{enumerate}
\end{itemize}
In the 2\textsuperscript{nd} case, the problem then reduces to adding back a connection between $L_{j_1}$ and $L_{k_0}$, which is exactly of the same form and can therefore be solved recursively. Nonetheless, the 3\textsuperscript{rd} case is too complex and can potentially lead to an extremely large amount of computation. Therefore, it is not adopted in our algorithm and we simply skip this OCS when we encounter this case.

The recursion ends when both links are available. Note that if we need to use an implicitly available connection at any point in the process, we need to remove one of the corresponding redundant connections simultaneously.

As a result, we introduce the definition of replacement chain.

\begin{definition}
A replacement chain of length $l$ is defined as the following sequence of atomic modifications:
\begin{equation*}
\begin{gathered}
\mathrm{Remove}(i_0,j_1,k_1),\mathrm{Add}(i_0,j_0,k_0),\\
\mathrm{Remove}(i_1,j_2,k_2),\mathrm{Add}(i_1,j_1,k_1),\\
\mathrm{Remove}(i_2,j_3,k_3),\mathrm{Add}(i_2,j_2,k_2),\\
\vdots\\
\mathrm{Remove}(i_{l-1},j_{l},k_{l}),\mathrm{Add}(i_{l-1},j_{l-1},k_{l-1}),\\
\mathrm{Add}(i_{l},j_{l},k_{l})
\end{gathered}
\end{equation*}
where
\begin{equation*}
\forall t,(j_t=j_{t+1}\land k_t\ne k_{t+1})\vee(j_t\ne j_{t+1}\land k_t=k_{t+1})
\end{equation*}
\end{definition}

We can see that the resulting modification scheme naturally forms a replacement chain as defined above.

As a result of applying the replacement chain to a scheme $Y$, $E(Y)_{j_0,k_0}$ increases by $1$, while the other elements in $E(Y)$ remain the same. Such a sequence is called a replacement chain because it tries to add $(T_{i_0},L_{j_0},L_{k_0})$ to the scheme and permutes $(T_{i_0},L_{j_1},L_{k_1})$, then it continues trying to add $(T_{i_1},L_{j_1},L_{k_1})$ to the scheme and permutes $(T_{i_1},L_{j_2},L_{k_2})$, etc. In fact, Paull has published a similar concept, but limited to replacing between two intermediate switches in symmetric Clos networks \cite{paull1962reswitching}. Ohta and Ueda also discussed this concept \cite{ohta1987rearrangement}.

\subsection{Restricting to Two OCSes}\label{resttwo}

There are many interesting properties when restricting to 2 OCSes, which serves as the foundation for the cases of multiple OCSes.

In the following text, we represent these 2 OCSes as \(A\) and \(B\). For ease of comprehension, we first describe the situation in the traditional model \textbf{with link capacities equal to 1}. We represent the \(j\)-th input switch abstracted from the \(j\)-th ToR switch as \(U_j\), and the \(k\)-th output switch abstracted from the \(k\)-th ToR switch as \(V_k\). Similarly, the replacement chain limited to two OCSes is in the following form:
\begin{equation*}
\begin{gathered}
\mathrm{Remove}(A,j_0,k_1),\mathrm{Add}(A,j_0,k_0),\\
\mathrm{Remove}(B,j_1,k_1),\mathrm{Add}(B,j_0,k_1),\\
\mathrm{Remove}(A,j_1,k_2),\mathrm{Add}(A,j_1,k_1),\\
\vdots\\
\end{gathered}
\end{equation*}
where \(\mathrm{Remove}(A/B,j,k)\) represents the removal of a connection between \(U_j\) and \(V_k\) established through \(A/B\), and \(\mathrm{Add}(A/B,j,k)\) represents the addition of a connection between \(U_j\) and \(V_k\) established through \(A/B\). Note that we assume that the links between \(A\) and \(V_{k_0}\) and between \(B\) and \(U_{j_0}\) are available. It seems that the replacement chain at this point is essentially a series of existing connections: \((A,U_{j_0},V_{k_1}),(B,U_{j_1},V_{k_1}),(A,U_{j_1},V_{k_2}),(B,U_{j_2},V_{k_2}),\ldots\), and the construction of the replacement chain is completed when the replacement chain extends to an available link. A valid replacement chain can always be found by extending on the series of existing connections because, in the extension process of the replacement chain, it will never visit a visited link; otherwise, the link would have more than one connection, contradicting the fact that the link capacity is equal to 1. We can draw the connections in \(A\) and \(B\) in the same graph, with the connections of \(A\) as solid lines and the connections of $B$ as dashed lines. Then, finding a replacement chain is equivalent to alternating between solid and dashed lines until we reach an available link, as shown in Fig.~\ref{fig4}. 

\begin{figure}[!t]
\centerline{\includegraphics[scale=1.2]{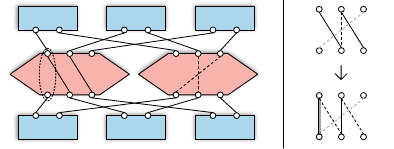}}
\caption{Replacement between two OCSes in a traditional network.}
\label{fig4}
\end{figure}

Therefore, the algorithm in this case is to perform a breadth-first search (BFS) on this graph. Since each link can be visited at most once, the time complexity is linear in the number of input/output switches. Note that in the case where the link capacity is greater than 1, a link may be visited more than once, but it will not contribute to a shorter replacement chain, because BFS always finds the shortest path to the available links.

In the bidirectional model, there is a similar approach, but a link, say $(A,L_j)$, should be allowed to be visited twice, once in the form of $(A,L_j,*)$ and once in the form of $(A,*,L_j)$. This is a corollary of the bipartite graph coloring in \ref{existrc}.

\subsection{Existence of Replacement Chains}\label{existrc}

Our goal is to prove the existence of replacement chains in an extended form of the network models, rather than the basic uniform models, that is, the proportional models, as defined below.

\begin{definition}\label{def5}
A traditional model is defined as \textbf{proportional}, if there exist $W_{L_j}\in\mathbb{Z}_{\ge0}$ for each $U_j$ and $V_j$, and $W_{T_i}\in\mathbb{Z}_{\ge0}$ for each $T_i$, such that $C_{i,j}=W_{T_i}\cdot W_{L_j}$.
\end{definition}
\begin{definition}
A bidirectional model is defined as \textbf{proportional}, if there exist $W_{L_j}\in\mathbb{Z}_{\ge0}$ for each $L_j$ and $W_{T_i}\in\mathbb{Z}_{\ge0}$ for each $T_i$, such that $C_{i,j}={\color{red}2}\cdot W_{T_i}\cdot W_{L_j}$.
\end{definition}
\textit{Proportional} encompasses \textit{uniform} (i.e., all links having the same capacity).

Note that having even link capacities is a key condition for guaranteeing the existence of a solution in the bidirectional model. A similar design is used in \cite{han2024lumoscore}.

We start by proving the theorem in the traditional model, as it serves as the foundation for the proof in the bidirectional model.

\begin{lemma}\label{lem}
In the proportional traditional model, for the scheduling of a single connection between $U_{j_0}$ and $V_{k_0}$, assume that the link between $T_{i_0}$ and $U_{j_0}$ and the link between $T_{i_1}$ and $V_{k_0}$ are not fully used for some $T_{i_0}$ and $T_{i_1}$, then a valid replacement chain can always be found by alternately replacing the connections in $T_{i_0}$ and $T_{i_1}$, and subsequently can always be found by the general method.
\end{lemma}

\begin{proof}
Since $C_{i,j}=W_{T_i}\cdot W_{L_j}$, we can divide each input switch $U_j$ and each link connected to it into $W_{L_j}$ switches and links, and the capacity of each divided link is $W_{T_i}$, where $T_i$ is the OCS to which the link is connected. Do the same for the output switches. This process is trivial, as the established connections can be trivially assigned to the divided switches.

Then, we divide each OCS $T_i$ and each link connected to it into $W_{T_i}$ switches and links, and the capacity of each divided link is $1$. It does not seem trivial to assign the connections to the divided OCSes. Consider the most complex case, where all the links connected to an OCS $T_i$ are fully used. If not all links are fully used, one can trivially add additional connections to make them fully used. Then, assigning the connections to the divided OCSes is equivalent to dividing a $W_{T_i}$-regular bipartite graph into $W_{T_i}$ perfect matches. Since $\forall k>0$, every $k$-regular bipartite graph has a perfect matching \cite{harary1969graph}, such a division always exists.

Now, the network is transformed into a traditional model with link capacities equal to 1, and we have shown that the lemma holds in this scenario in Section~\ref{resttwo}. Therefore, the lemma holds in the proportional traditional model.
\end{proof}

Now, we show the proof of the existence of replacement chains in the bidirectional model via a bipartite graph coloring method to map the bidirectional model to the traditional model.

\begin{theorem}
In the proportional bidirectional model, for scheduling a single connection between $L_{j_0}$ and $L_{k_0}$, a valid modification scheme can be found by finding replacement chains, i.e. there exists at least one valid modification scheme that is in the form of a replacement chain and some removal of redundant connections, if and only if there exists at least one link connected to $L_{j_0}$ and at least one link connected to $L_{k_0}$ that are available.
\end{theorem}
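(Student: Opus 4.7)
The plan is to prove the two directions of the biconditional separately, with the sufficient direction carrying the bulk of the argument.

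For the necessary direction, I would track the effect of the proposed scheme on link usage at $L_{j_0}$. Inside each interior block of the chain, the paired Remove and Add at a common top-level switch $T_{i_t}$ only transfer connection load between two low-level switches at $T_{i_t}$ without changing link usage at any other top-level switch; combined with the terminal unpaired Add and any redundant-connection removals, the overall effect is to place exactly one new connection between $L_{j_0}$ and $L_{k_0}$ (the stated property of a replacement chain). Placing this new connection strictly raises the usage on at least one link incident to $L_{j_0}$, so that link must have had room --- either explicitly available, or implicitly available via a redundant connection slated for removal. A symmetric argument on the $L_{k_0}$ side completes this direction.

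For the sufficient direction, I would give a constructive proof that follows the informal recursion in the preceding subsection. Given available links at $L_{j_0}$ and $L_{k_0}$, I would maintain as an invariant that each recursive subproblem ``add a connection between $L_{j_t}$ and $L_{k_0}$'' has one side retaining the original untouched available link and the other side carrying an available link freshly exposed by the previous block's Remove. At each step, either some top-level switch simultaneously offers both sides, in which case the chain terminates with an Add and any needed redundant-connection removal; or we extend the chain through the fresh available link, swap out a neighboring connection on the other side, and recurse. The invariant is preserved automatically by the chain construction, so the only remaining task is termination.

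Termination is the main technical obstacle, and precisely where the proportional condition $C_{i,j} = 2 W_{T_i} W_{L_j}$ is essential. I would search for a potential-function or weighted-counting argument --- for instance, a weighted sum (with weights such as $W_{T_i}$) of the availability configuration incident to the untouched side $L_{k_0}$ that can be shown to strictly decrease at each non-terminating step; or, equivalently, an argument that an infinite chain would contradict the global balance of capacities implied by proportionality. Without proportionality the chain can in principle cycle forever, so pinpointing the exact invariant that proportionality supplies --- and verifying its monotonicity --- is the heart of the proof.
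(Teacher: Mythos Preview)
Your necessary direction is fine and is actually spelled out a bit more carefully than the paper's, which simply remarks that the claim is apparent.

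The sufficient direction, however, has a genuine gap: you correctly identify termination as the heart of the proof but do not actually supply the argument, only a hope that some weighted potential will work. It is far from clear that a monotone potential of the kind you sketch exists --- an unrestricted chain can revisit both low-level and top-level switches, and the proportionality hypothesis $C_{i,j}=2W_{T_i}W_{L_j}$ does not obviously translate into a strictly decreasing scalar along an arbitrary extension step. Leaving this as ``I would search for'' is exactly the point where the proof has not yet begun.

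The paper takes a completely different route that sidesteps any potential-function search. It first converts the bidirectional network into a three-stage symmetric Clos network via a bipartite two-coloring: each link $(T_i,L_j)$ contributes $C_{i,j}$ vertices, paired by type-1 edges (using that $C_{i,j}$ is even), while existing connections give type-2 edges; the resulting graph has only even cycles, so a two-coloring splits each $L_j$ into an input switch $U_j$ and an output switch $V_j$. It then reduces the proportional symmetric network to a \emph{classical} one (all link capacities equal to $1$) by splitting each $U_j,V_j$ into $W_{L_j}$ copies and each $T_i$ into $W_{T_i}$ copies, the latter step requiring that a $W_{T_i}$-regular bipartite graph decompose into $W_{T_i}$ perfect matchings. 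Once in the classical setting, the paper invokes Paull's 1962 result: alternating between the two fixed intermediate switches $T_{i_0}$ and $T_{i_1}$ that carry the two initial available links yields a chain that can never revisit a link (capacity $1$ forbids it), hence terminates in at most linearly many steps.

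So the idea you are missing is this two-stage reduction to the classical case. Proportionality is used not to manufacture a potential but to make the switch-splitting go through cleanly; termination then falls out of a known structural fact rather than a bespoke monotonicity argument.
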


\begin{proof}
If all links connected to $L_{j_0}$ or all links connected to $L_{k_0}$ are not available, then apparently there exist no valid schemes. Otherwise, we can map the network to the traditional model, i.e. divide one half of the ports of a link into input ports and the other half into output ports, while ensuring that all connections are between input and output ports.

We can achieve this by constructing a bipartite graph. For each link $(T_i,L_j)$, we create $C_{i,j}$ vertices, namely $V_{i,j,0},V_{i,j,1},\ldots,V_{i,j,C_{i,j}-1}$. We create two types of edges in the graph:
\begin{enumerate}
    \item Edges $(V_{i,j,2t},V_{i,j,2t+1})$ for $0\le t<\frac{C_{i,j}}{2}$;
    \item For each connection $(T_i,L_j,L_k)$, allocate one vertex from the vertices of $(T_i,L_j)$ and one vertex from the vertices of $(T_i,L_k)$ (each vertex can be allocated at most once), and connect them with an edge.
\end{enumerate}
The graph naturally forms a bipartite graph because every vertex is connected to exactly one edge of type 1 and at most one edge of type 2, so that each cycle in the graph alternates between the two types of edges. We then color the vertices black and white so that every pair of vertices connected by an edge has different colors. We direct all the connections corresponding to type 2 edges from black to white, and divide each link into an up-link and a down-link, where the black vertices represent the up-links and the white vertices represent the down-links. Therefore, each ToR switch is divided into an input switch and an output switch, containing up-links and down-links respectively. Fig.~\ref{fig3} shows an example of bipartite graph coloring. 

\begin{figure}[!t]
\centerline{\includegraphics[scale=0.35]{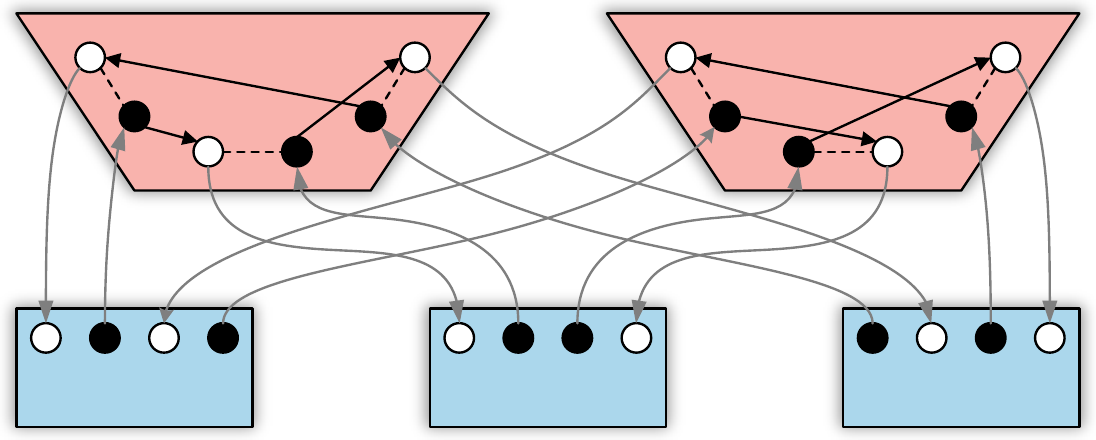}}
\caption{An example of bipartite graph coloring in a bidirectional network ($n=2$, $m=3$, $W_*=1$).}
\label{fig3}
\end{figure}

Note that we need to ensure that there is at least one pair of available links connected to different types of switches divided from $L_{j_0}$ and $L_{k_0}$, respectively. This can always be achieved by carefully choosing the color when coloring the vertices.

The problem now reduces to scheduling a single connection in the proportional traditional model, which is proven in Lemma~\ref{lem}. Therefore, the theorem holds.
\end{proof}

\noindent \textbf{Remark.} While the existence of a solution is strictly guaranteed only in the proportional model, the proposed algorithm can still function correctly in irregularly-structured models, with a probability of not finding a solution when a solution exists theoretically. The probability can be extremely low in real-world situations, and that a tiny number of connections failing to be established is actually insignificant, as long as the connections are added in a reasonably prioritized order, as mentioned in Section~\ref{adap}. Nonetheless, if the 3\textsuperscript{rd} case in Section~\ref{formrc} is adopted, then the algorithm is guaranteed to find feasible solutions when there exist.



\section{The FastReChain Algorithm}\label{frcalg}

\subsection{Basic Algorithm}\label{thealg}

We can perform a depth-first search (DFS) to find replacement chains. At each node of the search tree, our goal is to add a connection between $L_j$ and $L_k$. Therefore, we first need to enumerate the OCSes $T_i$ through which the connection will be established. In $T_i$, if the connection can be added directly, the search ends. Otherwise, one of the links from $L_j$ and $L_k$ to $T_i$ needs to be available. Assuming $(T_i,L_j)$ is available, we need to replace a connection to $L_k$ to make $(T_i,L_k)$ available. Therefore, we need to enumerate $j'$, satisfying $X_{i,j',k}>0$, then replace $(T_i,L_{j'},L_k)$, and proceed to the next level of search.

To find the shortest replacement chain, we can use iterative deepening search (IDS), that is, repeatedly run a depth-limited version of DFS with increasing depth limits. We can see that at each node of the search tree, two enumerations are required, namely: 1) enumerate the OCSes $T_i$ through which the connection between $L_j$ and $L_k$ will be established; 2) enumerate the ToR switches $L_{j'}$ to which the replaced connection $(T_i,L_{j'},L_k)$ is connected (assuming that $(T_i,L_j)$ is available and that we need to make $(T_i,L_k)$ available). To ensure the stability of the algorithm performance and the load balancing of the switches, we can perform these two enumerations in random orders.

For detailed pseudocode of the above basic algorithm, see Appendix~\ref{rcids}.

\subsection{Efficiency Analysis}
However, some might argue that this algorithm does not seem to work in practice. The depth of the search, which is also the length of the replacement chain, can be as large as the number of ToR switches. The search breadth, which is equivalent to the number of selectable switches (i.e. switches that can be selected to allow the search to proceed) at each node of the search tree can also be in the order of the number of switches. This yields an extremely large, exponentially increasing amount of computation. How could this algorithm possibly work in practice?

\noindent \textbf{Key observation:} When the network scale is small, the shortest replacement chain could be long, but the search breadth would be small, since the network scale is small. When the network size is large, the shortest replacement chain is likely to be very short because of the large number of possible replacement chains. We also need to take network load into account: When the network load is low, meaning that many or all of the links are available, it is highly possible that there exist replacement chains of very short length. When the network load is high, the number of selectable OCSes and replaceable connections is extremely limited. Therefore, the average time complexity is much better than it seems. In our experiment, even when the network load is 100\% (i.e. all the links are fully used), in a uniform traditional network with 155 input/output switches, 128 OCSes, and link capacities equal to 4, under the traffic traces from Facebook datacenters, the lengths of the shortest replacement chains did not exceed 6. Table~\ref{tab2} shows the distribution of the shortest replacement chain lengths across 143 reconfiguration phases in the experiment.

\begin{table}
\caption{Distribution of the shortest replacement chain lengths in a uniform traditional network.}\label{tab2}
\begin{center}
\small
\begin{tabular}{c|c}
\toprule
Replacement chain length&\emph{Number of schedulings}\\
\midrule
0&1352799\\
1&125233\\
2&35205\\
3&12133\\
4&3061\\
5&342\\
6&20\\
\(\ge\)7&0\\
\bottomrule
\end{tabular}
\end{center}
\end{table}

However, the algorithm still runs pretty slow, especially when the network load is high. Observing and speculating on the algorithm execution process, we found that the bottleneck of the algorithm lies in the enumeration of the switches. When the network load is high, there are very few selectable switches, but we still need to enumerate each switch equally. Especially in the last layer of search, it is common that only one or two switches can complete the construction of the replacement chain, but we need to enumerate all switches, while the search tree is fully expanded exactly at the last layer. Therefore, we need to find a way to filter out selectable switches.

\subsection{Bitset Optimization}\label{bsopt}
We can use bitset to perform the filtering mentioned above. Bitset supports bitwise operations on 64 bits at a time, or even more bits with parallel computing (such as Single Instruction Multiple Data (SIMD) or GPU). Here, we only explain how to use bitsets to filter out OCSes that have available links to some $L_{j_0}$ as an example. The pseudocode is shown in Algorithm~\ref{alg}.

First, we need to check which OCSes have links that are explicitly available. We need to maintain $m$ bitsets $\mathrm{bitsetLT}_0, \mathrm{bitsetLT}_1, \ldots,\mathrm{bitsetLT}_{m-1}$, each of which has a length of $n$. $\mathrm{bitsetLT}_j[i]$ is the $i$-th bit in $\mathrm{bitsetLT}_{j}$, which represents whether the link between $T_i$ and $L_j$ is explicitly available. To help maintain $\mathrm{bitsetLT}$, we also need to maintain a two-dimensional array $\mathrm{countLT}$, where $\mathrm{countLT}_{j,i}$ represents the number of connections that occupy the link between $L_j$ and $T_i$. Then, $\mathrm{countLT}$ and $\mathrm{bitsetLT}$ can be easily updated in $\Theta(1)$ time complexity when a connection is added or removed.

For OCSes with implicitly available links, bitsets cannot be updated in $\Theta(1)$ time directly. We can maintain $m$ bitsets $\mathrm{bitsetLL}_0, \mathrm{bitsetLL}_1, \ldots, \mathrm{bitsetLL}_{m-1}$, each of which has a length of $m$. $\mathrm{bitsetLL}_j[k]$ represents whether the connections between $L_j$ and $L_k$ are redundant. Maintenance of $\mathrm{bitsetLL}$ also requires a two-dimensional array $\mathrm{countLL}$, where $\mathrm{countLL}_{j,k}$ represents the number of connections between $L_j$ and $L_k$. We also need to maintain $\mathrm{bitsetLLT}$, where $\mathrm{bitsetLLT}_{j,k}[i]$ represents whether there exist connections between $L_j$ and $L_k$ that pass through $T_i$. To obtain the bitset of OCSes with implicitly available links to $L_{j_0}$, we need to take the bitwise OR of $\mathrm{bitsetLLT}_{j_0,k}$ for each $k$ that is the index of a 1 in $\mathrm{bitsetLL}_{j_0}$.

In our experiment, the algorithm runs \textasciitilde1000 times faster with bitset optimization. Furthermore, the bitset optimization makes the algorithm's time complexity \textbf{almost independent of the number of network ports or switches}, thus giving the algorithm the potential to scale to a very large scale.

\begin{algorithm}[!t]
    \SetKwProg{Fn}{function}{:}{end}
    \caption{A part of the bitset optimization.}\label{alg}

    \footnotesize
    
    \KwData{$m$, $n$, $C$, $D$, current scheme $X$, $\mathrm{bitsetLT}$, $\mathrm{countLT}$, $\mathrm{bitsetLL}$, $\mathrm{countLL}$, $\mathrm{bitsetLLT}$}
    \BlankLine
    
    \SetKwFunction{AddConnection}{AddConnection}
    \Fn{\AddConnection{$i$, $j$, $k$}} {
        $\mathrm{countLT}_{j,i}\gets \mathrm{countLT}_{j,i}+1$\;
        \textbf{if} $\mathrm{countLT}_{j,i}=C_{i,j}$ \textbf{then}\\
        \quad\ \,$\mathrm{bitsetLT}_{j}[i]\gets 0$\;
        $\mathrm{countLT}_{k,i}\gets \mathrm{countLT}_{k,i}+1$\;
        \textbf{if} $\mathrm{countLT}_{k,i}=C_{i,k}$ \textbf{then}\\
        \quad\ \,$\mathrm{bitsetLT}_{k}[i]\gets 0$\;
        \textbf{if} $\mathrm{countLL}_{j,k}=D_{j,k}$ \textbf{then}\\
        \quad\ \,$\mathrm{bitsetLL}_{j}[k]\gets 1$, $\mathrm{bitsetLL}_{k}[j]\gets 1$\;
        $\mathrm{countLL}_{j,k}\gets \mathrm{countLL}_{j,k}+1$\;
        $\mathrm{countLL}_{k,j}\gets \mathrm{countLL}_{k,j}+1$\;
        \textbf{if} $X_{i,j,k}=0$ \textbf{then}\\
        \quad\ \,$\mathrm{bitsetLLT}_{j,k}[i]\gets 1$, $\mathrm{bitsetLLT}_{k,j}[i]\gets 1$\;
        $X_{i,j,k}\gets X_{i,j,k}+1$, $X_{i,k,j}\gets X_{i,k,j}+1$\;
    }
    \BlankLine

    \SetKwFunction{RemoveConnection}{RemoveConnection}
    \Fn{\RemoveConnection{$i$, $j$, $k$}} {
        \textbf{if} $\mathrm{countLT}_{j,i}=C_{i,j}$ \textbf{then}\\
        \quad\ \,$\mathrm{bitsetLT}_{j}[i]\gets 1$\;
        $\mathrm{countLT}_{j,i}\gets \mathrm{countLT}_{j,i}-1$\;
        \textbf{if} $\mathrm{countLT}_{k,i}=C_{i,k}$ \textbf{then}\\
        \quad\ \,$\mathrm{bitsetLT}_{k}[i]\gets 1$\;
        $\mathrm{countLT}_{k,i}\gets \mathrm{countLT}_{k,i}-1$\;
        $\mathrm{countLL}_{j,k}\gets \mathrm{countLL}_{j,k}-1$\;
        $\mathrm{countLL}_{k,j}\gets \mathrm{countLL}_{k,j}-1$\;
        \textbf{if} $\mathrm{countLL}_{j,k}=D_{j,k}$ \textbf{then}\\
        \quad\ \,$\mathrm{bitsetLL}_{j}[k]\gets 0$, $\mathrm{bitsetLL}_{k}[j]\gets 0$\;
        $X_{i,j,k}\gets X_{i,j,k}-1$, $X_{i,k,j}\gets X_{i,k,j}-1$\;
        \textbf{if} $X_{i,j,k}=0$ \textbf{then}\\
        \quad\ \,$\mathrm{bitsetLLT}_{j,k}[i]\gets 0$, $\mathrm{bitsetLLT}_{k,j}[i]\gets 0$\;
    }
    \BlankLine

    \SetKwFunction{GetBitsetOfSelectableTs}{GetBitsetOfSelectableTs}
    \Fn{\GetBitsetOfSelectableTs{$j$}} {
        $\mathrm{bitset}\gets \mathrm{bitsetLT}_{j}$\;
        \ForEach{$k$ satisfying $\mathrm{bitsetLL}_j[k]=1$} {
            $\mathrm{bitset}\gets\mathrm{bitset}\ \vert\ \mathrm{bitsetLLT}_{j,k}$\;
        }
        \KwRet{$\mathrm{bitset}$}\;
    }
\end{algorithm}

\subsection{Rough Bound on the Length of the Replacement Chain}\label{limittwo}

To illustrate the low average time complexity of our algorithm, we show that it imposes a rough upper bound on the length of the \emph{shortest} replacement chain when the search is restricted to replacing between two OCSes.

Consider the process of the BFS described in \ref{resttwo}. First, assume $m$ is the number of input/output switches in the traditional model. In each layer, there are $c$ enumerable connections, where $c$ is the link capacity. Ideally, these $c$ connections lead to different links, and the BFS always expands to unvisited links. In this case, the length of the replacement chain is upper bounded by $\log_cm$. When $m=256$ and $c=4$, $\log_cm=4$. This upper bound is based on ideal assumptions and is usually not achievable, but it still reflects that the length of the shortest replacement chain is very limited.

Furthermore, it also implies that the length of the shortest replacement chain decreases as $c$ increases. We conducted an experiment on the maximum length of the shortest replacement chain in Facebook's traffic data, under the network parameters of 100\% network load, 256 OCSes, $m=155$, and $c\in\{1,2,4,8\}$. The results presented in Table~\ref{tab3} are consistent with our speculation.

\begin{table}[!t]
\caption{Maximum lengths of the shortest replacement chain under different link capacities in an experiment.}\label{tab3}
\begin{center}
\begin{tabular}{c|cccc}
\toprule
Link capacity ($c$)&$1$&$2$&$4$&$8$\\
\midrule
\emph{Replacement chain length}&$138$&$10$&$7$&$5$\\
\bottomrule
\end{tabular}
\end{center}
\end{table}

\subsection{Algorithm Adaptiveness and Flexibility}\label{adap}

The algorithm itself is very flexible and can support functional requirements in actual scenarios that other algorithms cannot support. For example, to prevent a single search from taking longer than the average due to excessive search breadth and depth in extreme cases, we can gradually limit the search breadth as the depth limit grows. Specifically, given the maximum number of computations \(\mathrm{maxComp}\) to achieve and the current depth limit \(d\), we limit the breadth at each node of the search tree to:
\begin{equation}
\mathrm{maxBreadth} = \mathrm{maxComp}^{\frac1d}
\end{equation}
so that each depth-deepening iteration has approximately the same amount of computation equal to \(\mathrm{maxComp}\). When the number of tried selectable switches reaches the breadth limit, the algorithm stops expanding on more selectable switches. This operation actually makes the algorithm adaptively trade off running time and the number of rewirings, adding a layer of protection for the stability of algorithm performance. This technique is validated to be effective in at least all the artificially constructed extreme cases we tested.

Our algorithm also works on a first-come, first-served basis, which means that it naturally supports some priority-related requirements. For example, in irregularly-structured networks, it is difficult to determine whether a demand is feasible, so connections that are more important can be scheduled first, so that connections with higher priorities are established with greater probability.

\subsection{Refining Across Multiple Schedulings}\label{refinedver}

The original algorithm always finds the shortest possible replacement chain for scheduling a single connection. When scheduling multiple connections, we can further improve the algorithm. We consider that there may be some operations in the later scheduling of connections that cancel each other out with some operations in the previous scheduling of connections. Therefore, we can search for multiple replacement chains and select the one with the smallest total number of rewirings when combined with all previous replacement chains. This idea is somewhat similar to the Successive Shortest Path (SSP) algorithm in the MCF problem, if you consider the replacement chain as a special kind of augmenting path.

In our experiments, however, we found that the refined version has some relatively small improvements in the number of rewirings, but has a larger negative impact on the constant factor of the time complexity (please refer to Section~\ref{coneval} for some numerical statistics). Since this is essentially a trade-off between running time and the number of rewirings, we leave the choice between these two variants (more specifically, the choice of the hyperparameter of the number of replacement chains to be searched) to the actual network designer.


\section[(Concentrated) ToE Evaluation]{(Concentrated)\footnote{``Concentrated'' ToE refers to the ToE commonly adopted in current network architectures, where OCSes are reconfigured periodically, as the opposite of ``real-time'' ToE.} ToE Evaluation}\label{coneval}

\subsection{Network Structure Adaptation}\label{modeladap}

There exist multiple ToE algorithms and general scheduling algorithms\footnote{``General scheduling algorithms'' refer to those algorithms that are essentially the same as ToE algorithms, but without the objective of minimizing the number of rewirings (which is referred to as ``static centralized scheduling algorithms'' in \cite{cao2021research}). These algorithms were typically studied in the context of traditional Clos networks.} \cite{cao2021research,hwang2003control,hwang2002modification,lee1996new,wang2018parallel} for the traditional model. However, most of them are based on the bipartite graph structure and therefore cannot be applied directly to the bidirectional model, thus a mapping method to reduce to the traditional model is required. It has been mentioned in Section~\ref{curdesign} that such a mapping is imperfect and introduces a mapping loss. Similar ideas has also been discussed in \cite{han2024lumoscore}.

We assume that the bidirectional model we use has even link capacities. To convert the bidirectional network structure and the port matching schemes to traditional ones, we use the bipartite coloring method mentioned in Section~\ref{existrc}. Additionally, we need an algorithm to convert the logical topologies in the bidirectional model to the traditional model.

Let \(D\) be the original logical topology in the bidirectional model and \(D'\in\mathbb{Z}^{m\times m}_{\ge0}\) be the converted logical topology. Let \(C\) be the link capacity matrix in the bidirectional model. According to the previous analysis, we know that \(D\) is feasible if and only if it satisfies the following condition:
\begin{equation}
\forall j,\sum_kD_{j,k}\le\sum_iC_{i,j}
\end{equation}
We also know that \(D'\) is legal and feasible if and only if it satisfies the following conditions:
\begin{subequations}
\begin{equation}
\forall j,k,D'_{j,k}+D'_{k,j}=D_{j,k}    
\end{equation}
\begin{align}
\forall j,&\sum_kD'_{j,k}\le\frac12\sum_iC_{i,j}\\
\forall k,&\sum_jD'_{j,k}\le\frac12\sum_iC_{i,k}
\end{align}
\end{subequations}
We can tighten the constraints so that \(D'\) satisfies the following conditions:
\begin{subequations}
\begin{align}
\forall j,&\left\lfloor\frac12\sum_kD_{j,k}\right\rfloor\le\sum_kD'_{j,k}\le\left\lceil\frac12\sum_kD_{j,k}\right\rceil\\
\forall k,&\left\lfloor\frac12\sum_jD_{j,k}\right\rfloor\le\sum_jD'_{j,k}\le\left\lceil\frac12\sum_jD_{j,k}\right\rceil
\end{align}
\end{subequations}
We now show an algorithm that consistently achieves the above constraints.

Let \(G_{j,k}\) be the remainder of \(D_{j,k}\) divided by \(2\). As a result, \(G\) can be seen as the adjacent matrix of a simple undirected graph consisting of \(m\) vertices.

First, we repeat the following process: select a vertex with an odd degree in the graph, and arbitrarily select a trail (i.e. a path with possibly repeated nodes but not repeated edges) starting from this vertex until we reach some other vertex with an odd degree for the first time. Before reaching that vertex, there always exist unvisited edges to go through, since the vertex degrees are even.

Then, orient the edges on the trail in the direction of the traversal and delete them from the undirected graph. It can be seen that in a trail, there is exactly one edge connected to the starting vertex and one edge connected to the ending vertex that are oriented, and for other vertices, there are exactly two edges connected to them that are oriented in opposite directions. Since there must be an even number of vertices with odd degrees in an undirected graph, the above process will turn all vertices with odd degrees into vertices with even degrees.

Second, we repeat the same process, but starting from and ending at the same even-degree vertex. The process is trivial.

After these processes, all edges connected to each vertex in the graph are oriented in either direction, and the difference between the number of edges in the two directions does not exceed \(1\). Let \(G'\) be the adjacency matrix of the directed graph after all edges are directed, then let
\begin{equation}
D'_{j,k}=\left\lfloor\frac 12D_{j,k}\right\rfloor+G'_{j,k}
\end{equation}
It can be found that \(D'\) satisfies the constraints imposed above. For a detailed proof, see Appendix~\ref{nsaproof}.

The above algorithm can be easily implemented in \(O\left(m^2\right)\) time complexity.

\subsection{Evaluated Algorithms}\label{evalalg}

\noindent \textbf{Brute Force ILP.} As is done by Zhao et al. \cite{zhao2019minimal}, the problem can be modeled as an ILP problem and then solved using a solver such as Gurobi \cite{gurobi}. However, given the scale of the problem and the efficiency of ILP solvers, this approach is impractical in medium- to large-scale clusters and often takes days without finding a solution. Due to this inability, this algorithm is {\color{red}\textbf{NOT}} included in the results.

\noindent \textbf{Greedy MCF.} This is an algorithm introduced in \cite{zhao2019minimal}. It essentially divides the logical topology equally among all OCSes and then uses MCF to distribute the remainders. Since this algorithm is based on MCF, it only supports the traditional model under 100\% network load (for the definition of network load, see \ref{evalsetup}).

\noindent \textbf{Bipartition MCF.} The bipartition method was introduced in Google's patent \cite{Wang_Zhang_Li_2022}. This method requires solving a subproblem in each bipartition step. In \cite{zhang2023reducing}, the subproblem is solved using MCF. This algorithm also supports only the traditional model under 100\% network load. We extended this approach to partially loaded networks using a technique we call minimum-cost bounded flow (MCBF).

\noindent \textbf{Bipartition MCBF.} This is the algorithm extended from Bipartition MCF with the technique of MCBF to encompass partially loaded networks.

\noindent \textbf{FastReChain.} This is our algorithm described in Section~\ref{thealg} with bitset optimization.

\noindent \textbf{FastReChainRefined.} This is the refined version of FastReChain described in Section~\ref{refinedver}. This algorithm improves in terms of the number of rewirings by searching for multiple replacement chains and selecting the best one. The more replacement chains are searched each time, the fewer rewirings, but the search takes more time. In our evaluation, we set the number of replacement chains that the algorithm searches each time to 8.

\subsection{Evaluation Setup}\label{evalsetup}

We will evaluate the algorithms in both the bidirectional and the traditional model. Therefore, we first need to redefine some notations in the context of the traditional model. A traditional model contains \(m\) input switches, \(m\) output switches, and \(n\) OCSes. \(D_{j,k}\) represents the number of connections required between the \(j\)-th input switch and the \(k\)-th output switch. \(X_{i,j,k}\) represents the number of connections between the \(j\)-th input switch and the \(k\)-th output switch that pass through the \(i\)-th OCS in scheme \(X\). \(C_{i,j}\) represents the link capacity between the \(i\)-th OCS and the \(j\)-th input switch as well as between the \(i\)-th OCS and the \(j\)-th output switch. In the following text, unless otherwise specified, all the notations apply to both networks.

\noindent \textbf{Simulation scenario.} We assume continuous monitoring of the traffic between ToR switches (or between input switches and output switches). Our objective is to generate the logical topology at a fixed time interval and reconfigure the OCSes accordingly to enable the network configuration to adapt to changes in the traffic pattern.

\noindent \textbf{Task modes.} In each test case, there will be a series of demand phases arranged on a timeline, where in each phase a new logical topology arrives.

We will evaluate the algorithms in two task modes. The first is the continuous task mode, in which the original matching scheme is the scheme generated by the algorithm itself last demand phase. For the initial demand phase, the original matching scheme is set to empty (i.e. no connections are established). The second is the discontinuous task mode, in which the original matching scheme is randomly constructed based on the last logical topology. This is to simulate the situation where some unexpected errors happen and another algorithm has to take over the task. Indeed, we found that some of the algorithms have the ability to exploit their own scheme structures, so we need to test their performance in different scenarios.

\noindent \textbf{Result measurement.} We measure the quality of our schemes in terms of \emph{rewiring ratio} (RR). There are $T$ demand phases $D^{(0)},D^{(1)},\ldots,D^{(T-1)}$, and $T-1$ times of reconfiguration in between these demand phases, where the $t$-th reconfiguration consists of an original matching scheme $X^{(t)}$, and a new matching scheme $Y^{(t)}$ produced by the algorithm. Note that $X^{(t)}$ conforms to $D^{(t)}$ and $Y^{(t)}$ conforms to $D^{(t+1)}$. We define the number of connections in $D^{(t)}$ as
\begin{equation}
\mathrm{NumConns}(t)=\sum_{j=0}^{m-1}\sum_{k=0}^{m-1}D^{(t)}_{j,k}
\end{equation}
We define the number of rewirings between $X^{(t)}$ and $Y^{(t)}$ as
\begin{equation}
\mathrm{NumRewirings}(t)=\sum_{i=0}^{n-1}\sum_{j=0}^{m-1}\sum_{k=0}^{m-1}\left\vert X^{(t)}_{i,j,k}-Y^{(t)}_{i,j,k}\right\vert
\end{equation}
where $\vert x\vert$ represents the absolute value of $x$. Then, the rewiring ratio of the $t$-th reconfiguration phase is defined as
\begin{equation}
\mathrm{RR}(t)=\frac{\mathrm{NumRewirings}(t)}{\mathrm{NumConns}(t)+\mathrm{NumConns}(t+1)}
\end{equation}
The smaller the rewiring ratio, the better.

\noindent \textbf{Data source.} We use the open traffic traces from Facebook datacenters in 2024 \cite{avin2020complexity}. Given the traffic matrix \(\mathrm{Tr}\) aggregated in a time period, where \(\mathrm{Tr}_{j,k}\) represents the volume of traffic from the \(j\)-th rack to the \(k\)-th rack, the logical topology is converted from \(\mathrm{Tr}\) according to the following prioritization strategy:

Specifically, for the traditional model, the \(r\)-th connection between the \(j\)-th input switch and the \(k\)-th output switch has a weight of \(\left(\mathrm{Tr}_{j,k}+1\right)/r\). For the bidirectional model, the \(r\)-th connection between \(L_j\) and \(L_k\) has a weight of \(\left(\max\left\{\mathrm{Tr}_{j,k},\mathrm{Tr}_{k,j}\right\}+1\right)/r\). The connection with the highest weight which can be added to the logical topology matrix without making it infeasible is added, until no more connections satisfy the condition, or the desired network load is achieved.

We choose an aggregation interval length of 3600 seconds and generate a logical topology every 600 seconds. That is, the time intervals corresponding to the demand phases (in seconds) are \([0,3599],[600,4199],\ldots\). Please note that this interval is set only for the convenience of benchmarking, and the actual interval can be set smaller. Each traffic trace will provide 139 demand phases, including the initial demand phase.

\noindent \textbf{Parameters.} We use the trace of Facebook cluster A in 2024, which has \(m=155\). Define \(c\) as the link capacity between each pair of switches and \(l\) as the network load, which is defined as the sum of the used capacity of each link divided by the sum of the capacity of each link, i.e., \(l\) is defined as
\begin{equation}
l=\frac{\sum_{j,k}D_{j,k}}{\sum_{i,j}C_{i,j}}
\end{equation}
We tested combinations of \(n\in\{128,256,384\}\), \(c\in\{2,4,8\}\) for the traditional model or \(c\in\{4,8,16\}\) for the bidirectional model, and \(l\in\{0.2,0.4,0.6,0.8,1.0\}\).

\subsection{Evaluation Results}\label{evalres}

\begin{figure}[!t]
    \centering

    \includegraphics[width=0.9\linewidth]{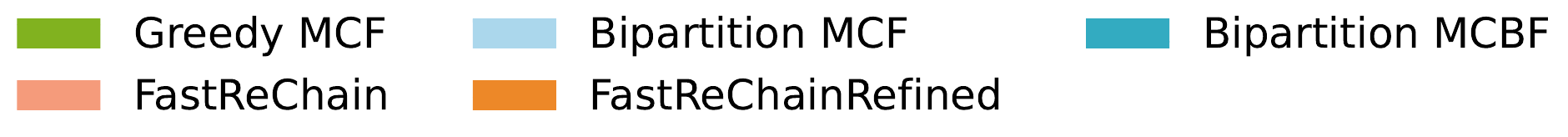}

    \begin{tabular}{@{}c@{}c@{}c@{}}
    
    \subfloat[\(n=128,c=2\)]{
    \includegraphics[width=0.3\linewidth]{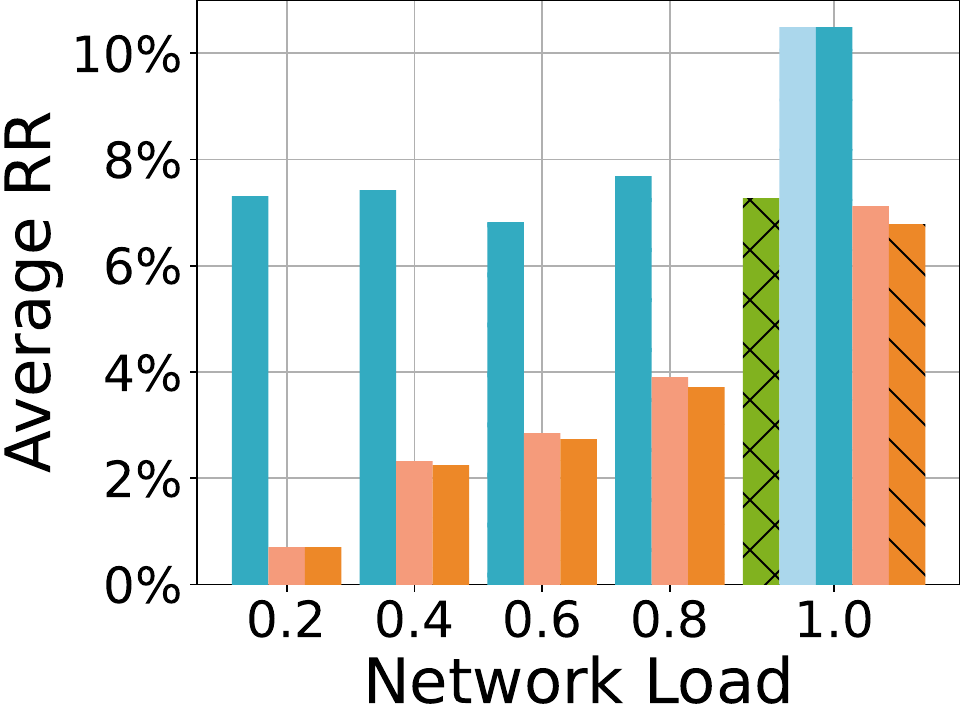}
    }
    &
    \subfloat[\(n=128,c=4\)]{
    \includegraphics[width=0.3\linewidth]{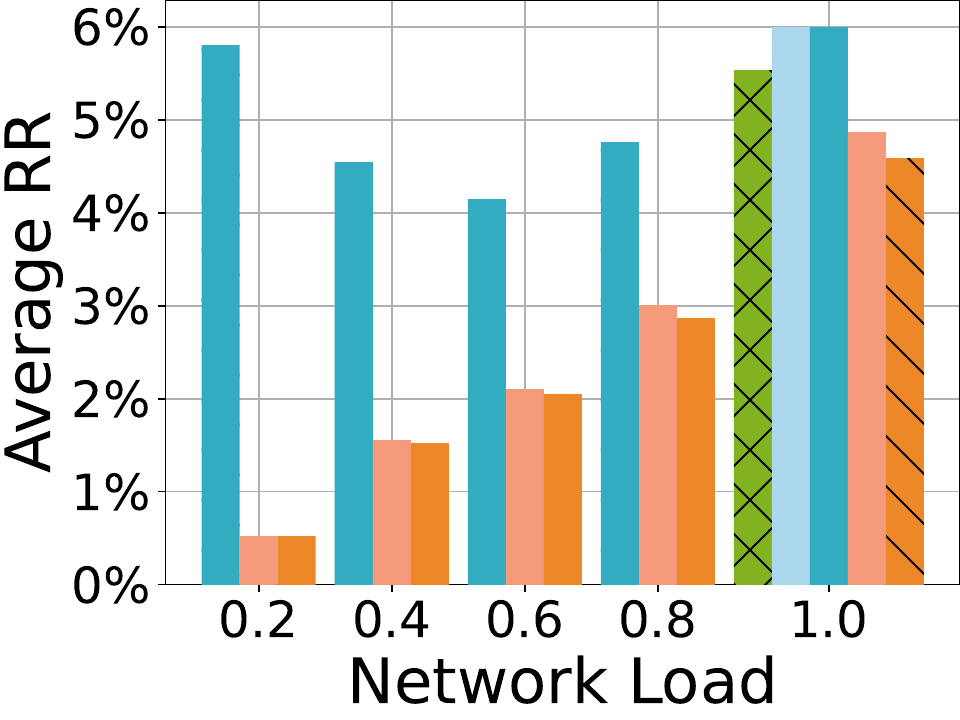}
    }
    &
    \subfloat[\(n=128,c=8\)]{
    \includegraphics[width=0.3\linewidth]{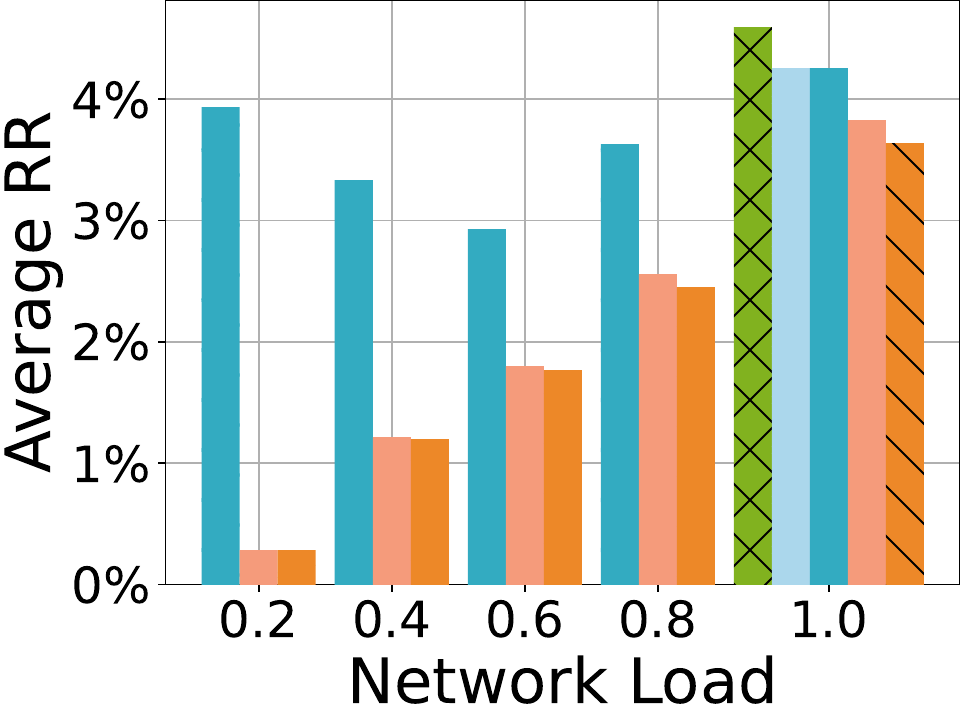}
    }

    \\
    
    \subfloat[\(n=256,c=2\)]{
    \includegraphics[width=0.3\linewidth]{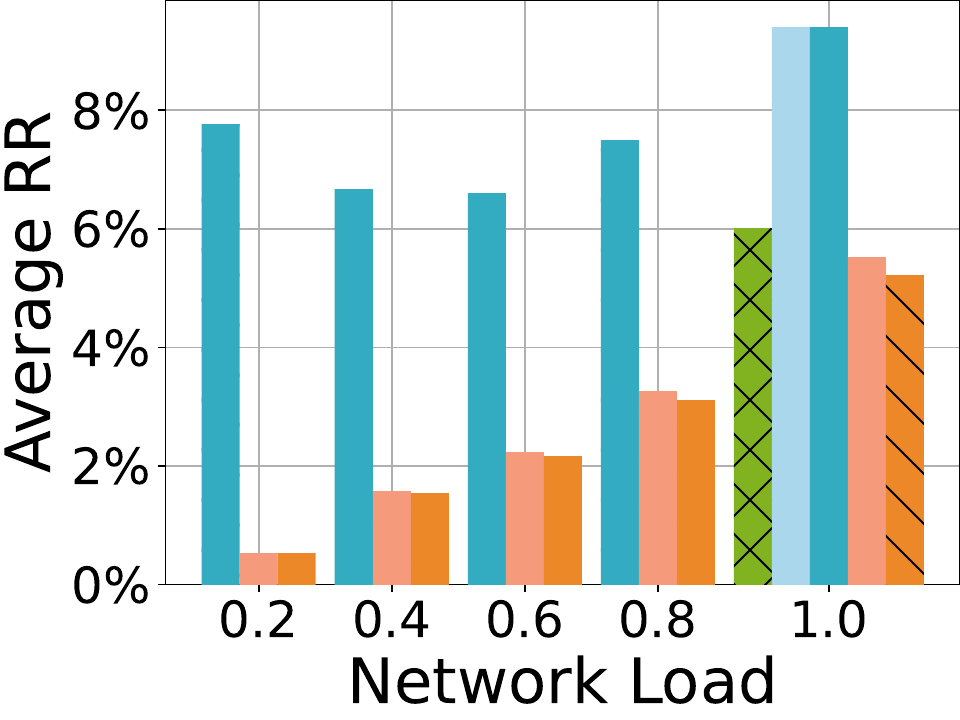}
    }
    &
    \subfloat[\(n=256,c=4\)]{
    \includegraphics[width=0.3\linewidth]{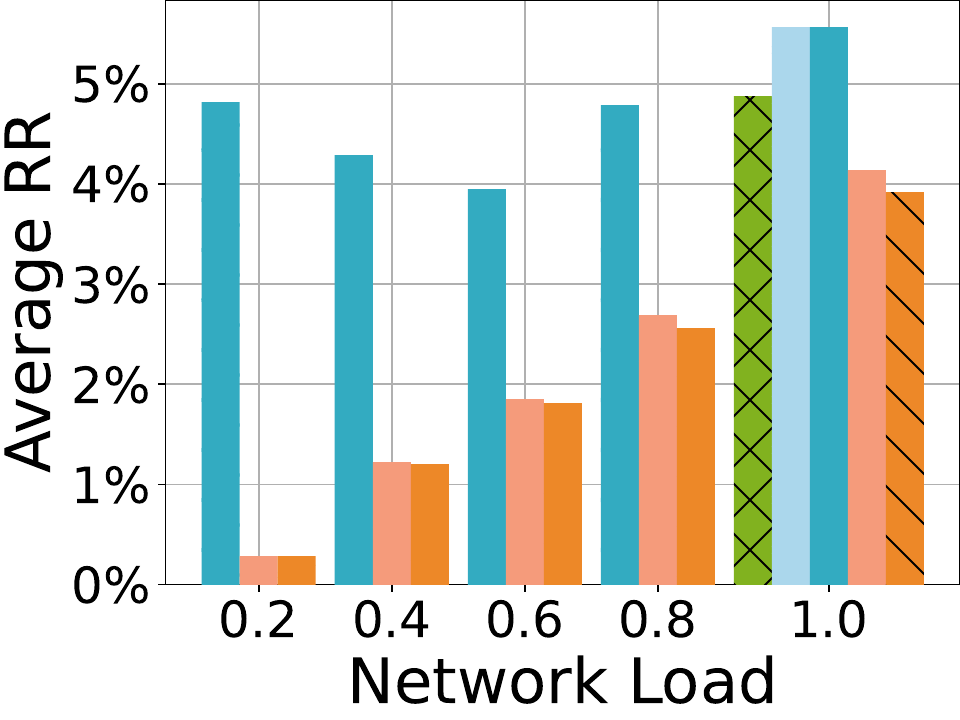}
    }
    &
    \subfloat[\(n=256,c=8\)]{
    \includegraphics[width=0.3\linewidth]{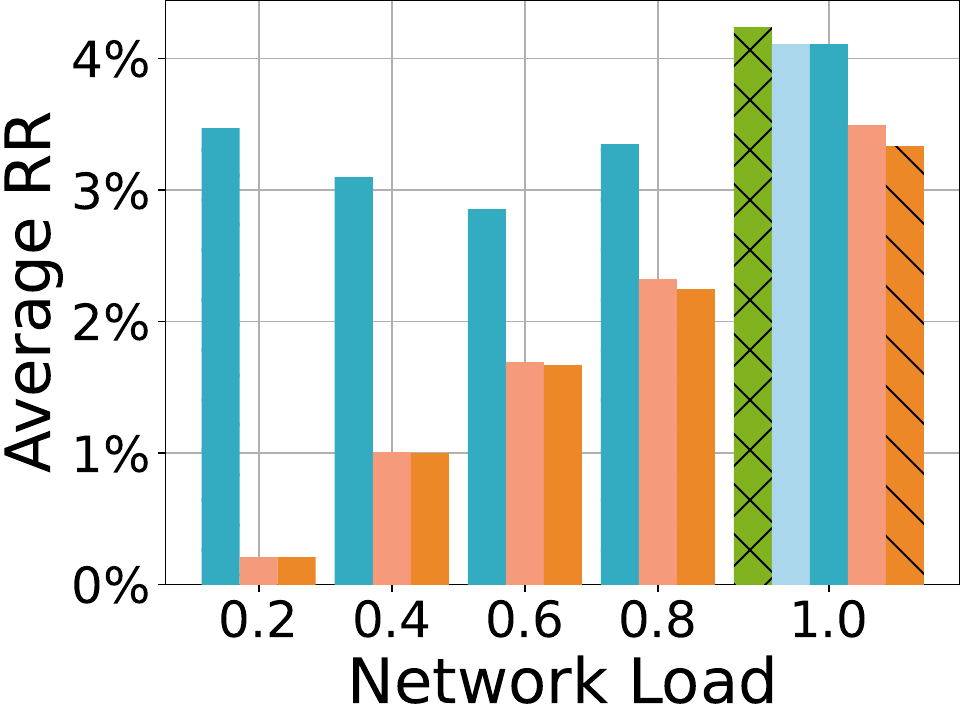}
    }

    \\
    
    \subfloat[\(n=384,c=2\)]{
    \includegraphics[width=0.3\linewidth]{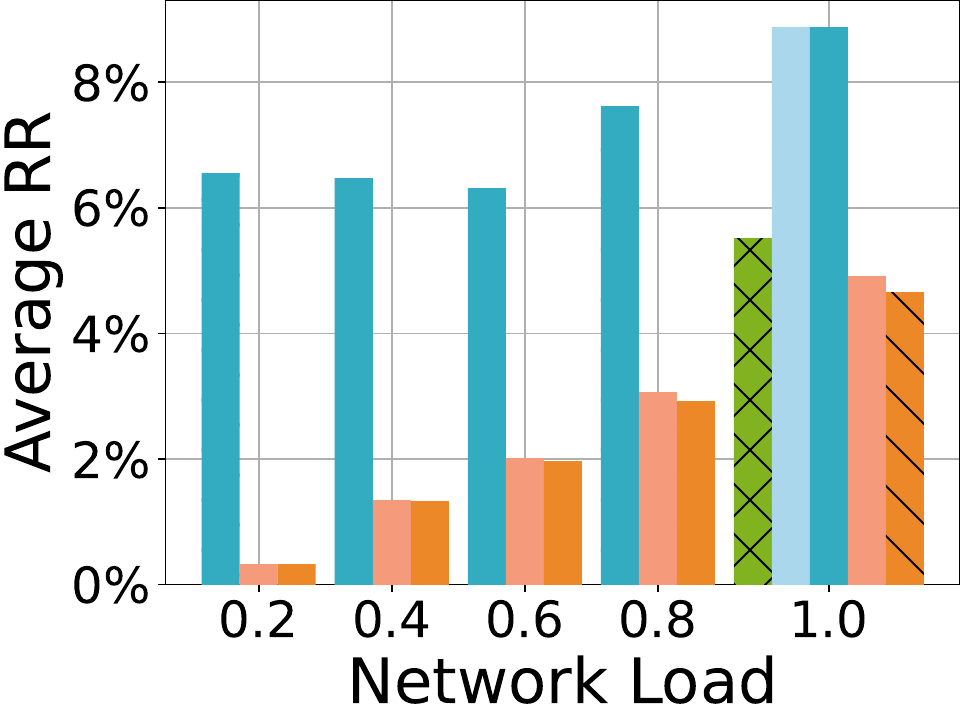}
    }
    &
    \subfloat[\(n=384,c=4\)]{
    \includegraphics[width=0.3\linewidth]{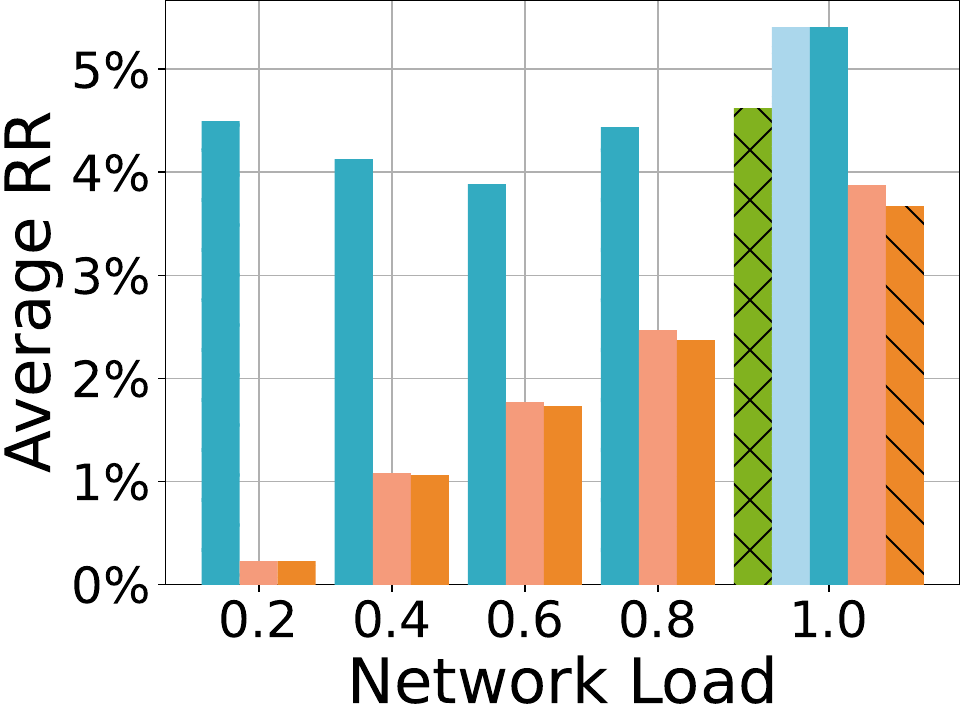}
    }
    &
    \subfloat[\(n=384,c=8\)]{
    \includegraphics[width=0.3\linewidth]{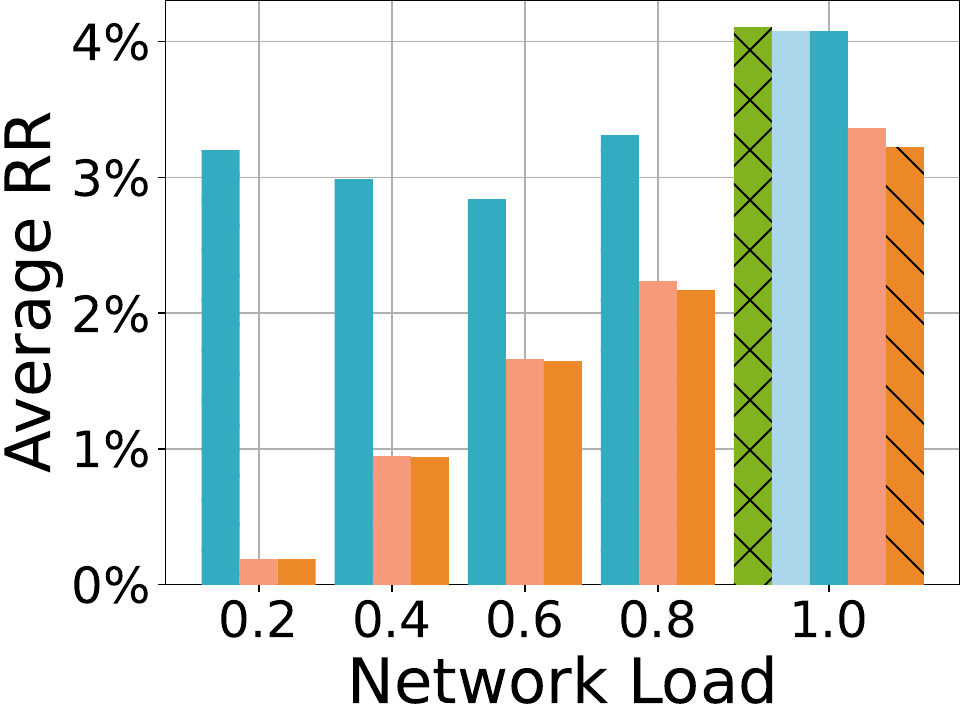}
    }

    \end{tabular}
    
    \caption{Rewiring ratio in the traditional model, continuous task mode.}\label{figACRR}
\end{figure}

\begin{figure}[!t]
    \centering
    
    \includegraphics[width=0.9\linewidth]{expfigs/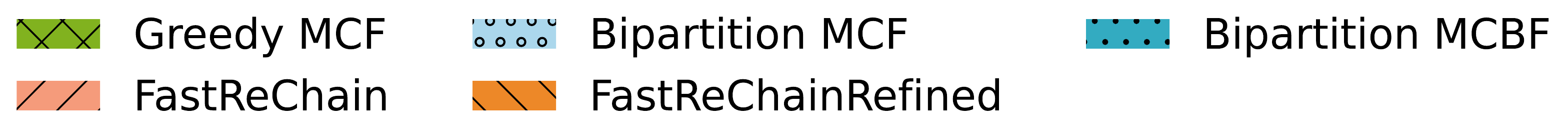}

    \begin{tabular}{@{}c@{}c@{}c@{}}
    
    \subfloat[\(n=128,c=2\)]{
    \includegraphics[width=0.3\linewidth]{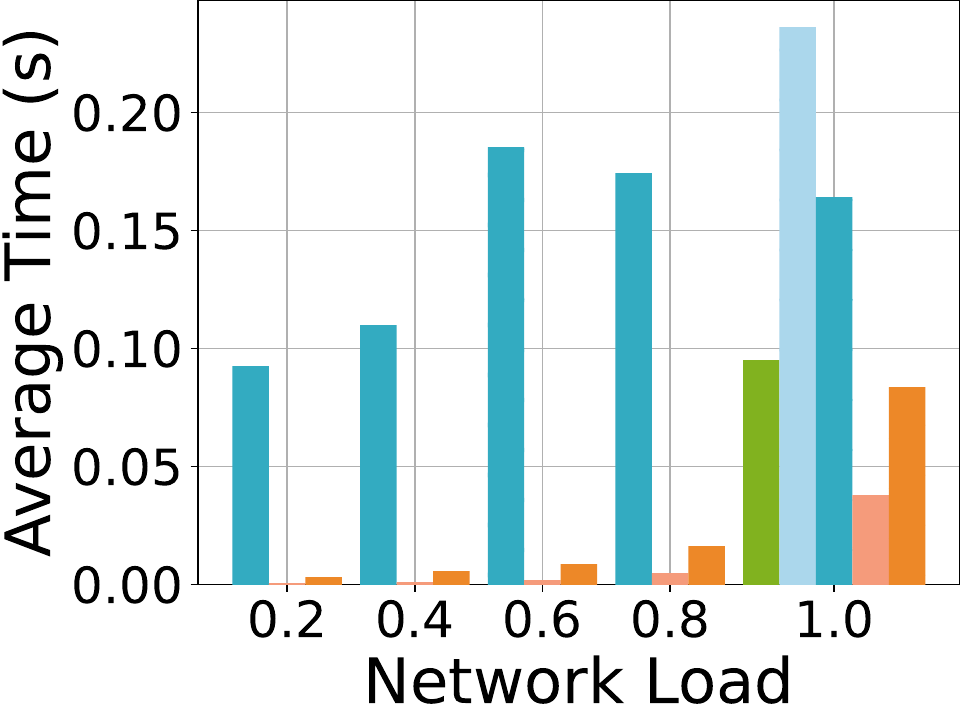}
    }
    &
    \subfloat[\(n=128,c=4\)]{
    \includegraphics[width=0.3\linewidth]{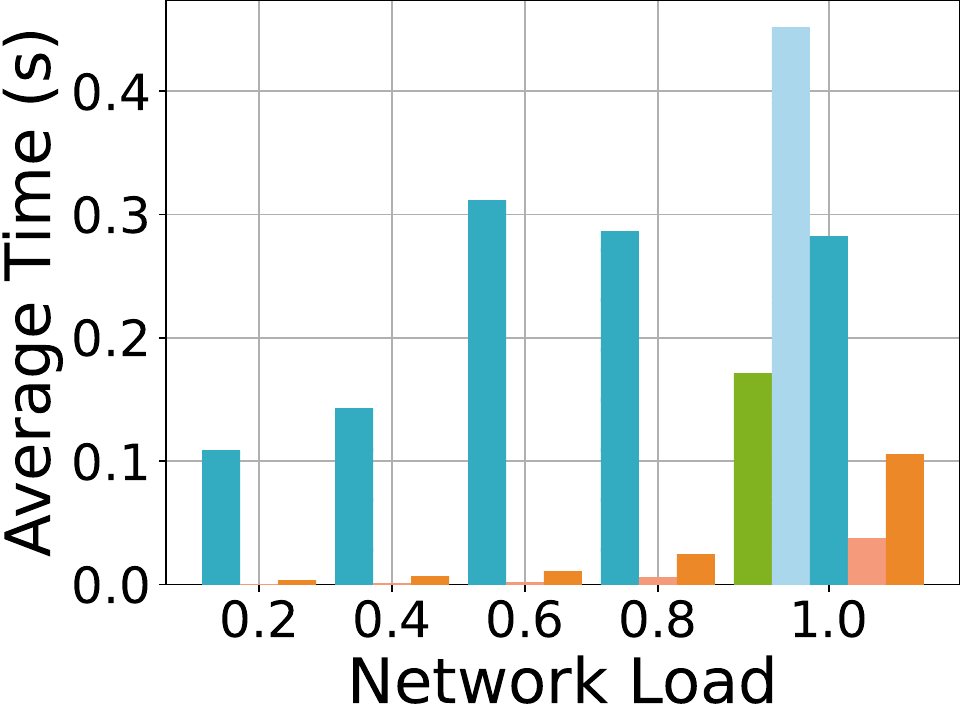}
    }
    &
    \subfloat[\(n=128,c=8\)]{
    \includegraphics[width=0.3\linewidth]{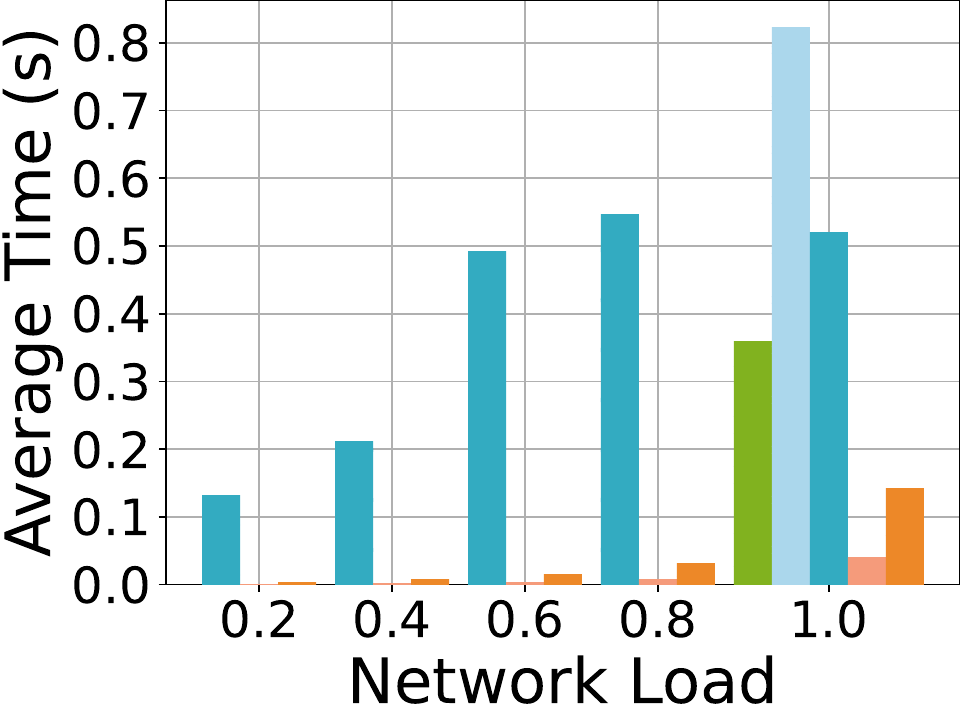}
    }

    \\
    
    \subfloat[\(n=256,c=2\)]{
    \includegraphics[width=0.3\linewidth]{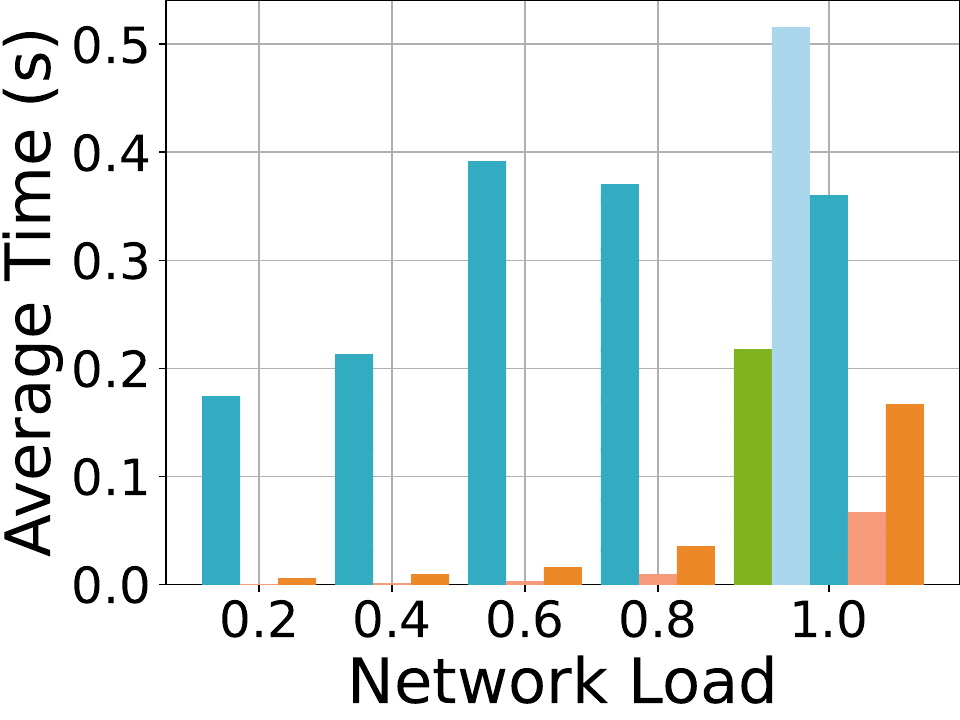}
    }
    &
    \subfloat[\(n=256,c=4\)]{
    \includegraphics[width=0.3\linewidth]{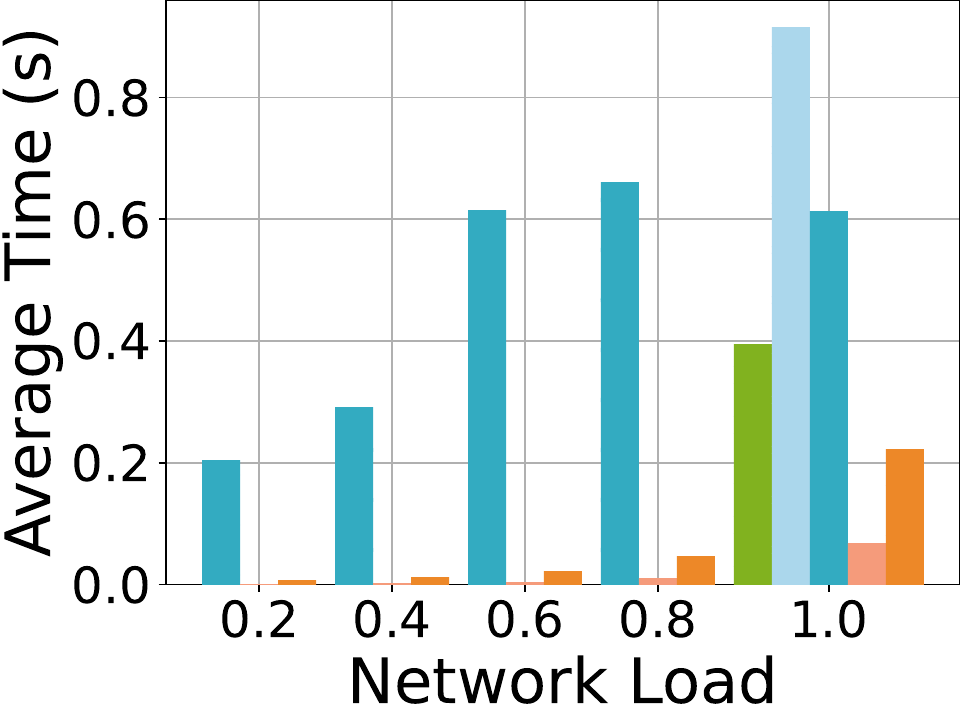}
    }
    &
    \subfloat[\(n=256,c=8\)]{
    \includegraphics[width=0.3\linewidth]{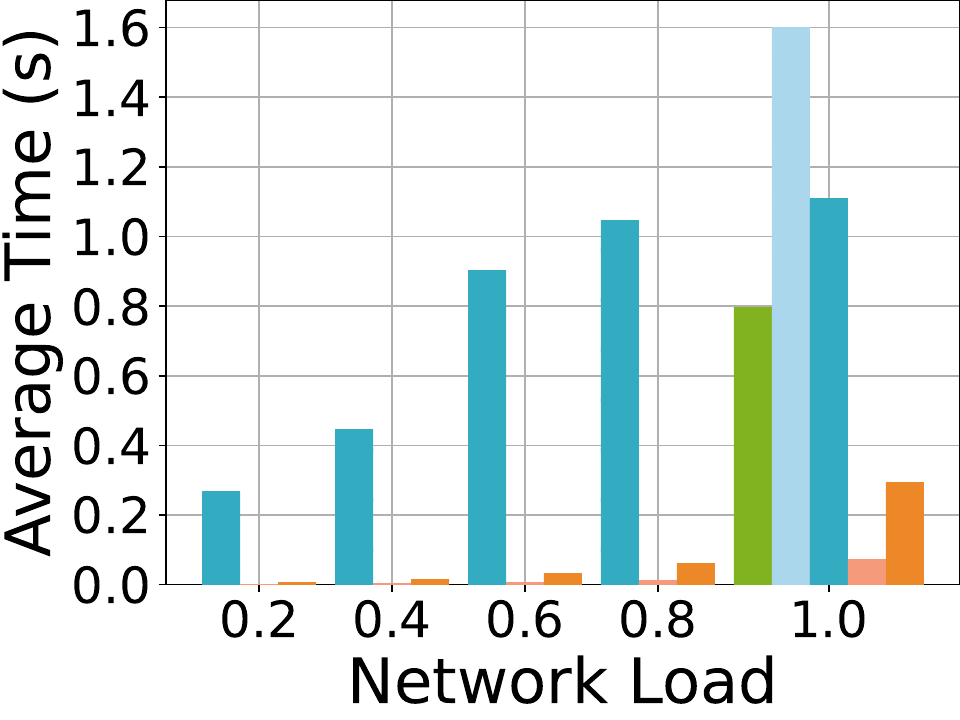}
    }

    \\
    
    \subfloat[\(n=384,c=2\)]{
    \includegraphics[width=0.3\linewidth]{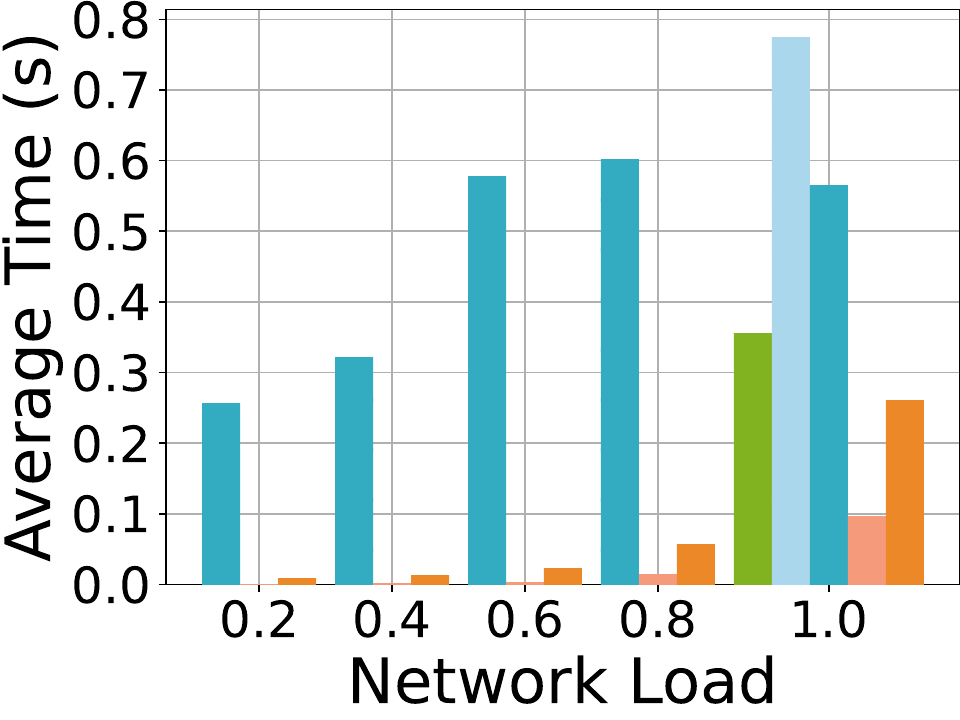}
    }
    &
    \subfloat[\(n=384,c=4\)]{
    \includegraphics[width=0.3\linewidth]{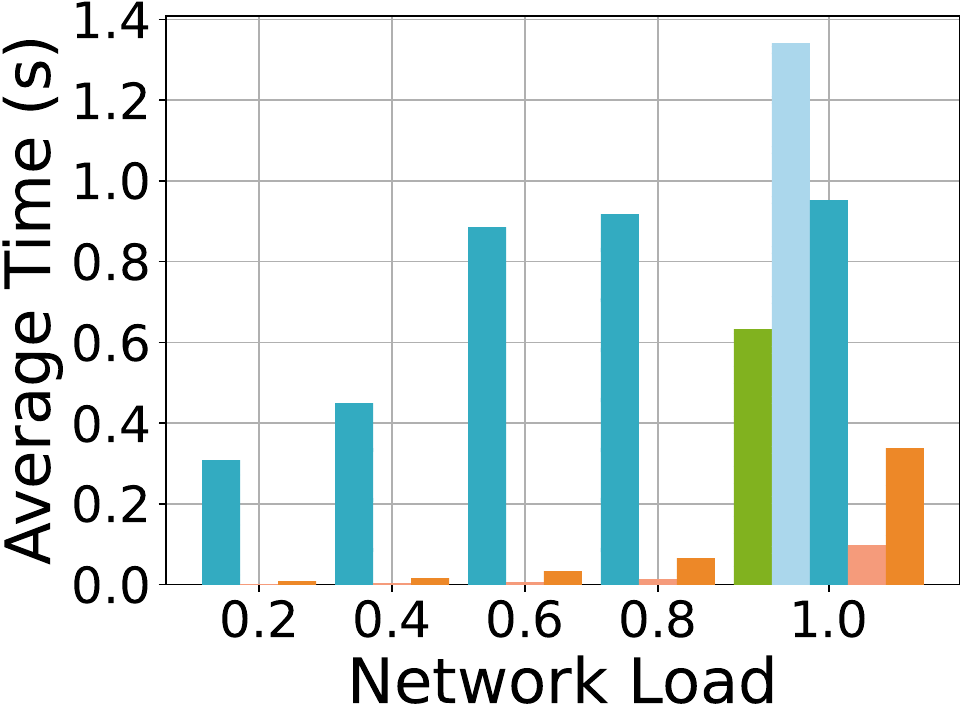}
    }
    &
    \subfloat[\(n=384,c=8\)]{
    \includegraphics[width=0.3\linewidth]{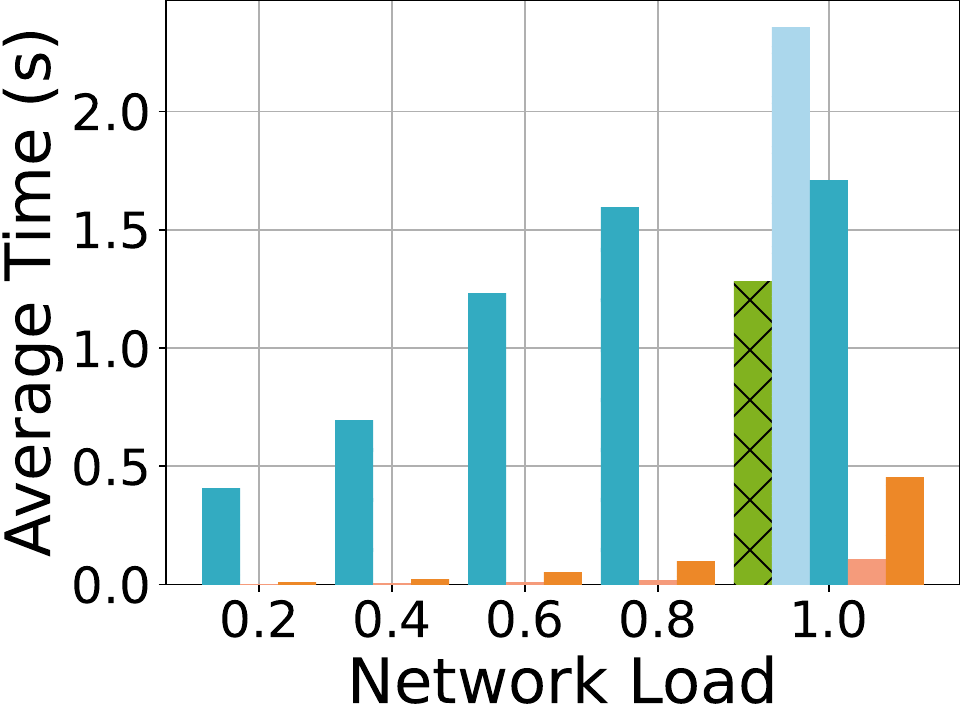}
    }
    
    \end{tabular}
    
    \caption{Running time in the traditional model, continuous task mode.}\label{figACTime}
\end{figure}

Figs.~\ref{figACRR}~and~\ref{figACTime} show the rewiring ratios and the running times in the traditional model in continuous task mode, respectively. In general, FastReChain outperforms other algorithms in all metrics across all scenarios. In terms of running time, FastReChain outperforms the others overwhelmingly. Even in the worst case of $n=128$, $c=2$ and $l=1$, FastReChain only takes less than half the time of Greedy MCF, and in the vast majority of cases with $l=1$, FastReChain takes about a quarter or even less of the time of Greedy MCF, not to mention other algorithms. When the network load is low, FastReChain overwhelmingly outperforms Bipartition MCBF, taking just a few milliseconds, while Bipartition MCBF requires hundreds of milliseconds. In terms of rewiring ratio, when $l=1$, FastReChain still has a solid improvement over other algorithms. When the network load is low, FastReChain's improvement is more significant, with rewiring rates that are about half or less of that of Bipartition MCBF.

At the same time, we find that in terms of rewiring ratio, the larger $c$ is and the smaller $n$ is, the gap between FastReChain and the bipartition methods narrows, but the performance of Greedy MCF becomes worse. We also find that in terms of running time, the larger $c$ and the larger $n$, the gap between FastReChain and other methods widens.

\begin{figure}[!t]
    \centering

    \includegraphics[width=0.9\linewidth]{expfigs/legend.pdf}

    \begin{tabular}{@{}c@{}c@{}c@{}}
    
    \subfloat[\(n=256,c=2\)]{
    \includegraphics[width=0.3\linewidth]{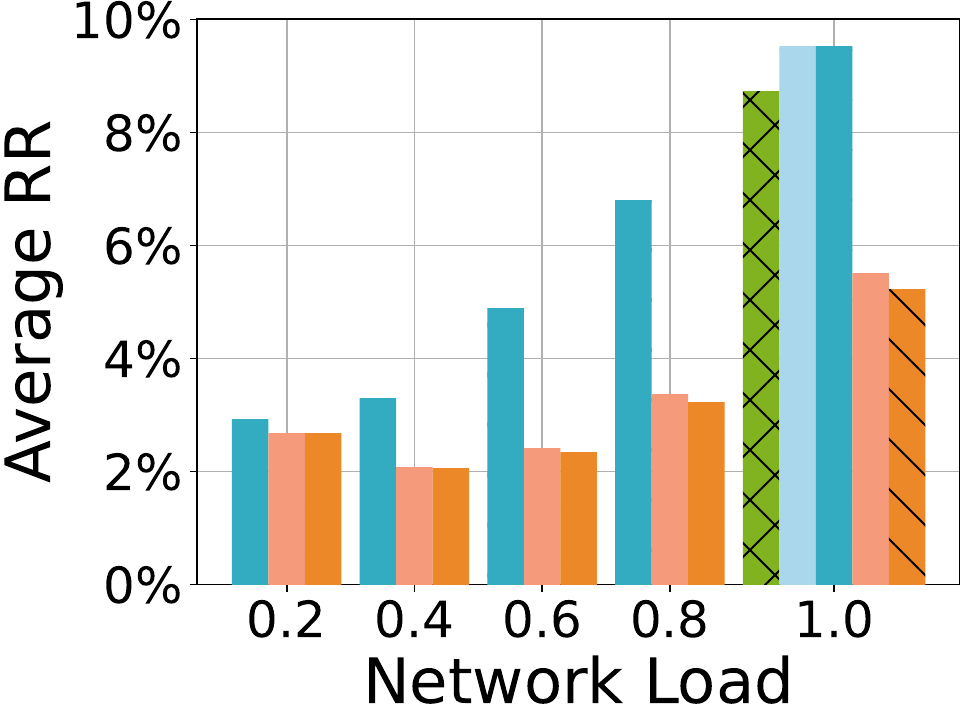}
    }
    &
    \subfloat[\(n=256,c=4\)]{
    \includegraphics[width=0.3\linewidth]{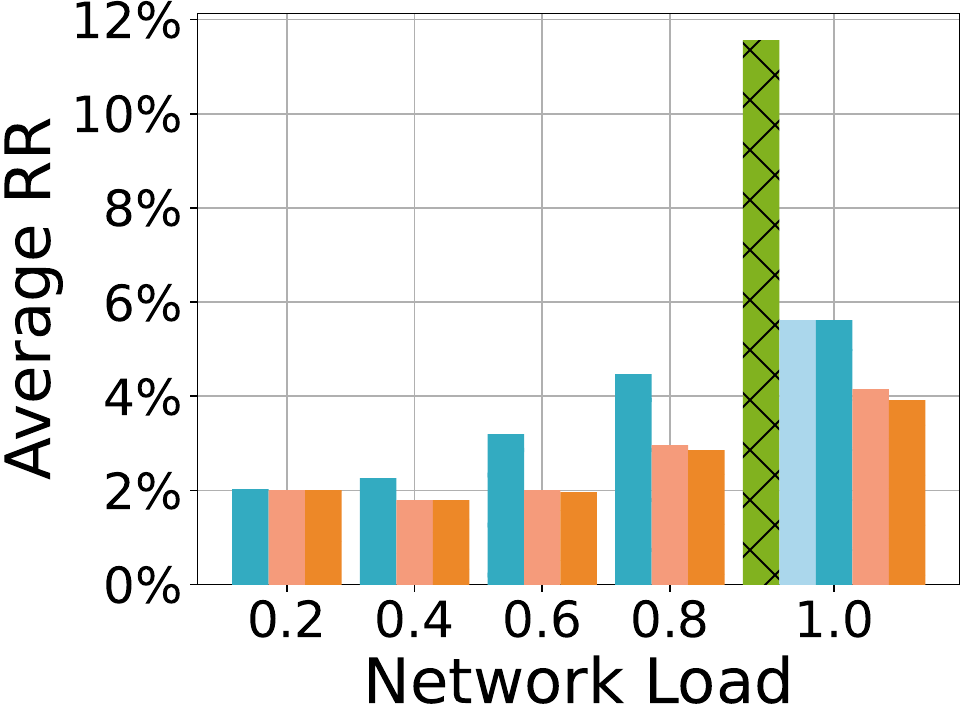}
    }
    &
    \subfloat[\(n=256,c=8\)]{
    \includegraphics[width=0.3\linewidth]{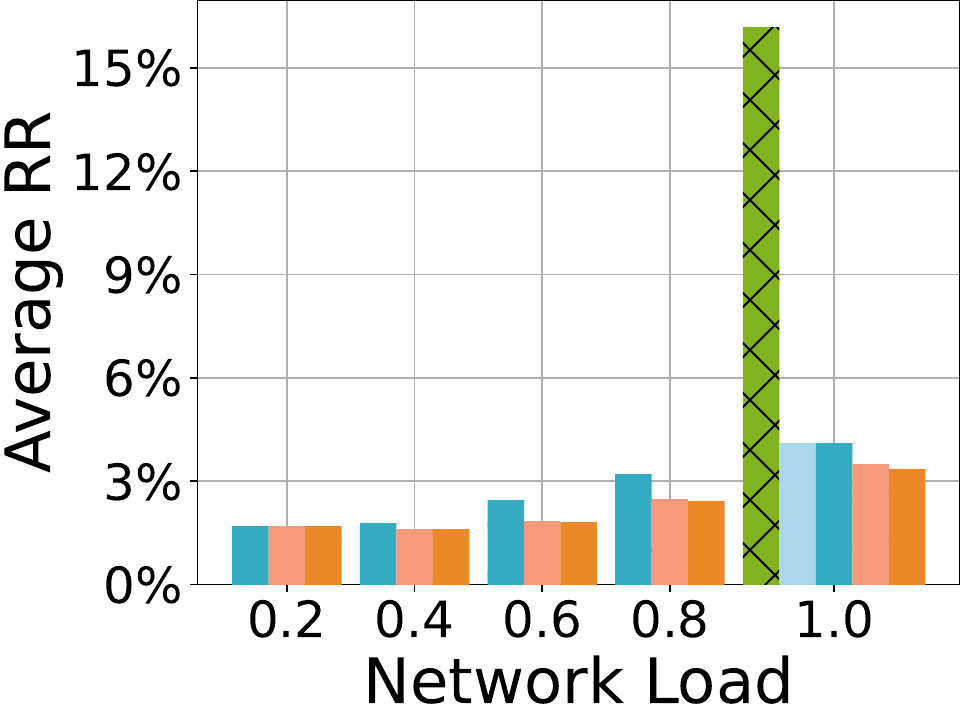}
    }

    \end{tabular}
    
    \caption{Rewiring ratio in the traditional model, discontinuous task mode.}\label{figADRR}
\end{figure}

For Fig.~\ref{figADRR} and Fig.~\ref{figBCRR}, we only show the results for the case of $n=256$ because the trends of the results in other cases are similar.

Fig.~\ref{figADRR} shows the rewiring ratios in the traditional model in discontinuous task mode. We find that in this case, the gap between FastReChain and Bipartition MCBF in terms of rewiring ratio becomes smaller when the network load is low, but the performance of Greedy MCF deteriorates significantly as $c$ increases.

\begin{figure}[!t]
    \centering

    \includegraphics[width=0.9\linewidth]{expfigs/legend.pdf}

    \begin{tabular}{@{}c@{}c@{}c@{}}
    
    \subfloat[\(n=256,c=4\)]{
    \includegraphics[width=0.3\linewidth]{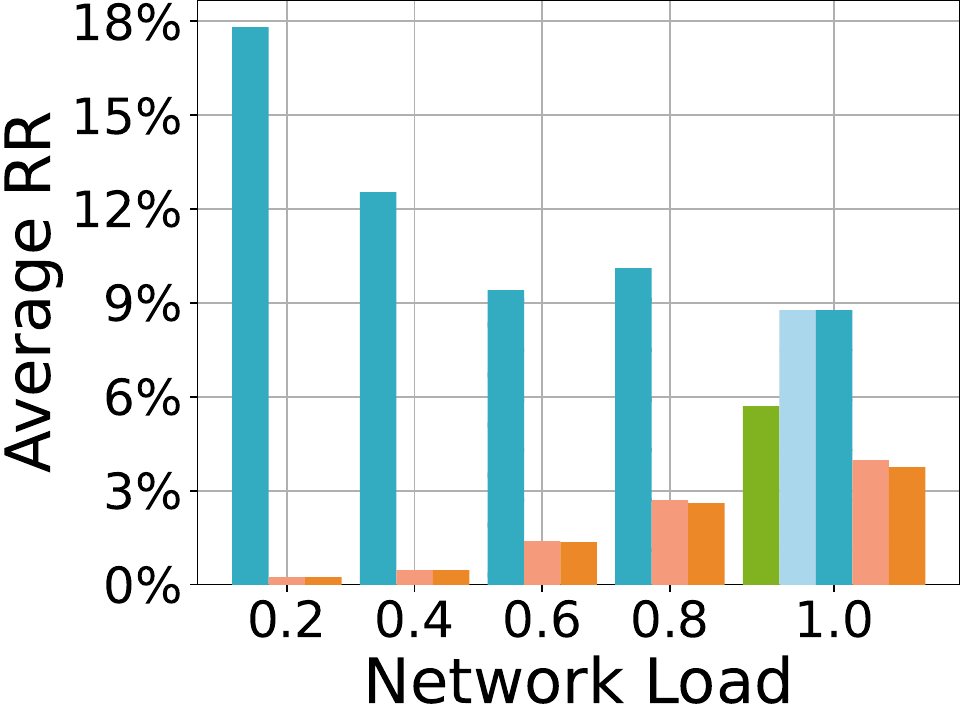}
    }
    &
    \subfloat[\(n=256,c=8\)]{
    \includegraphics[width=0.3\linewidth]{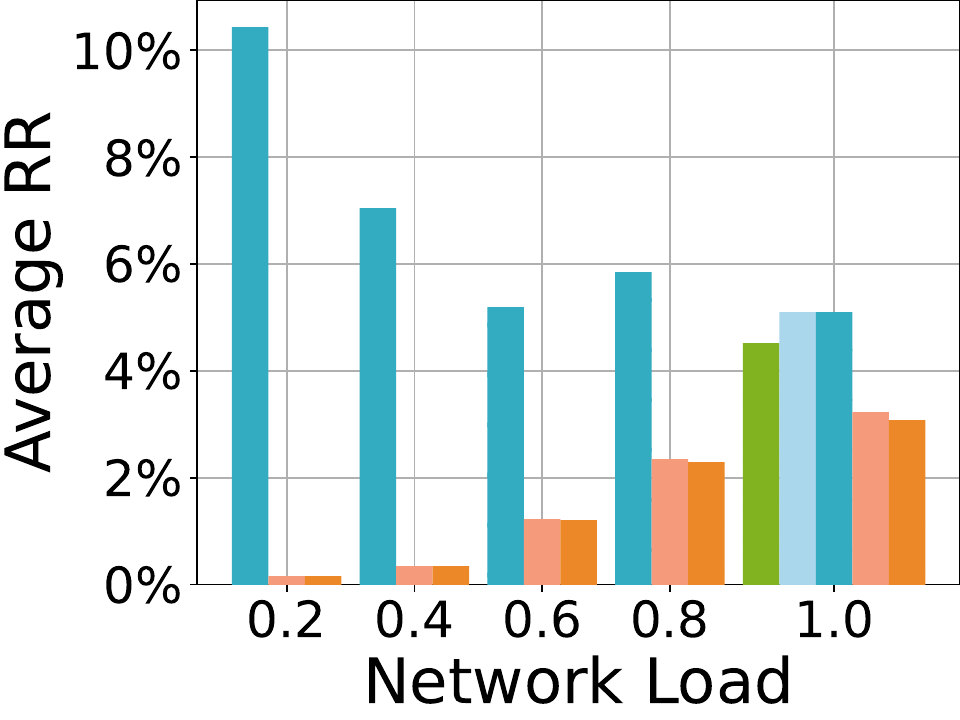}
    }
    &
    \subfloat[\(n=256,c=16\)]{
    \includegraphics[width=0.3\linewidth]{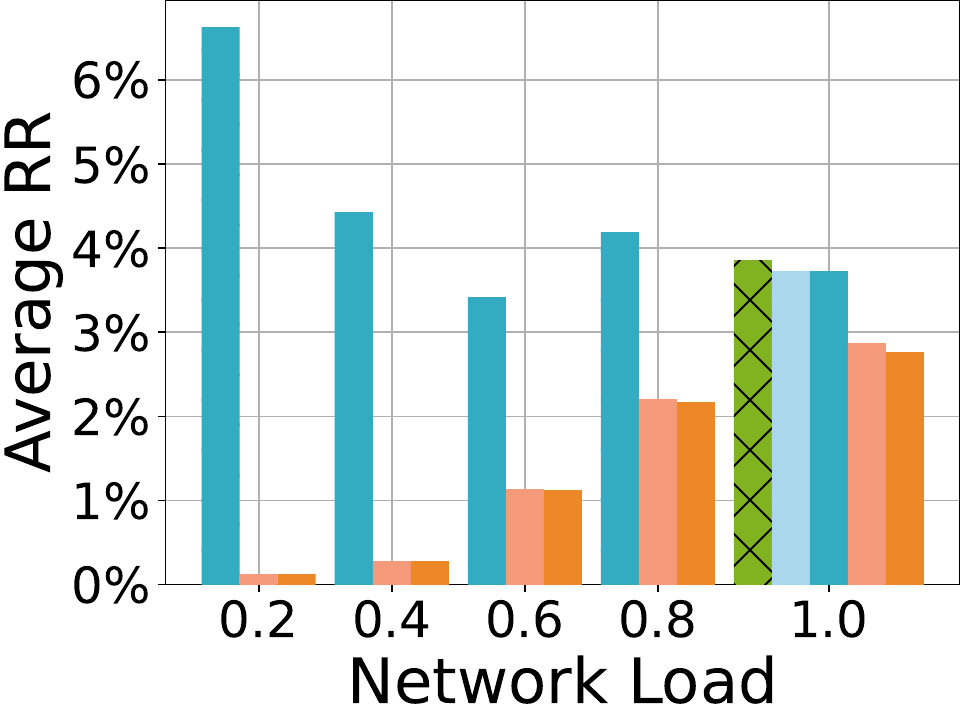}
    }
    
    \end{tabular}
    
    \caption{Rewiring ratio in the bidirectional model, continuous task mode.}\label{figBCRR}
\end{figure}

Fig.~\ref{figBCRR} shows the rewiring ratios in the bidirectional model in continuous task mode. Note that for all other algorithms, network structure adaption is applied. We can see that in bidirectional networks, the gap between FastReChain and other algorithms in terms of rewiring ratio widened considerably.

\section{Real-time ToE Evaluation}\label{dseval}

\emph{Real-time ToE} is defined as opposed to \emph{concentrated ToE}, where the task changes from adapting to a whole new logical topology to supporting atomic changes in the logical topology (i.e., adding/removing 1 required connection between 2 devices), and constructing modification schemes correspondingly. It should be noted that this is potentially an important problem for future fast-switching optical switches \cite{ballani2018bridging,zhao2023optical}, but for now, it does not have a practical application.

To date, our algorithm is the only one that supports real-time ToE (or even dynamic centralized scheduling \cite{cao2021research}), so there are no other baseline algorithms. We evaluated real-time ToE only in the bidirectional model.

\subsection{Evaluation Setup}

\noindent \textbf{Simulation scenario.} The simulation scenario and the data source are similar to those in Section~\ref{evalsetup}. The difference is that in real-time ToE, the logical topology needs to change dynamically according to changes in network traffic. We define the following two types of atomic logical topology changes:
\begin{itemize}
    \item $\textrm{Add}(j,k)$: Add a connection between $L_j$ and $L_k$ in the logical topology, i.e. $D_{j,k}\leftarrow D_{j,k}+1,D_{k,j}\leftarrow D_{k,j}+1$.
    \item $\textrm{Remove}(j,k)$: Remove a connection between $L_j$ and $L_k$ in the logical topology, i.e. $D_{j,k}\leftarrow D_{j,k}-1,D_{k,j}\leftarrow D_{k,j}-1$.
\end{itemize}
The algorithm needs to generate corresponding matching scheme modifications based on each atomic logical topology change.

\noindent \textbf{Data source.} We still use the traces from Facebook datacenters in 2024, the same as in Section~\ref{evalsetup}. We need an algorithm different from that in Section~\ref{evalsetup} that can generate atomic logical topology changes dynamically. We still have a traffic matrix $\mathrm{Tr}$ aggregated in a time period, which is dynamically updated every second. We assign a priority weight of \(\left(\max\left\{\mathrm{Tr}_{j,k},\mathrm{Tr}_{k,j}\right\}+1\right)/r\) to the \(r\)-th connection between \(L_j\) and \(L_k\) in the demand. Each time a new packet between \(L_j\) and \(L_k\) arrives and is added to \(\mathrm{Tr}\), we try to add as many connections between \(L_j\) and \(L_k\) to the logical topology as possible, while removing connections with lower weights that hinder the addition. We choose an aggregation interval length of 600 seconds, and the atomic logical topology changes are dynamically generated each second.

\noindent \textbf{Parameters.} The parameters are the same as in Section~\ref{evalsetup}, except that the case \(l=1.0\) is not tested because the algorithm we use to generate logical topologies can only achieve around \(l=0.8\) in the trace of Facebook cluster A in 2024.

\noindent \textbf{Result measurement.} We measure the performance of our algorithm in terms of running time per operation and number of rewirings per operation. Running time per operation is the total running time divided by the number of atomic logical topology changes, in nanoseconds. Number of rewirings per operation is the total number of rewirings divided by the number of atomic logical topology changes.

\subsection{Evaluation Results}

\begin{table}[!t]
\caption{Real-time ToE evaluation results under different \(c\)'s.}\label{tab4}
\begin{center}
\begin{threeparttable}

\begin{tabular}{c|ccc|ccc}
\toprule

\multirow{2}{*}{$(n,c)$}&\multirow{2}{*}{$l$}&Time\tnote{*}&\multirow{2}{*}{NR\tnote{**}}&\multirow{2}{*}{$l$}&Time&\multirow{2}{*}{NR}\\
&&(ns)&&&(ns)&\\

\midrule[0.8pt]

\multirow{2.5}{*}{$(256,4)$}&0.2&18.72&0.023&0.4&119.5&0.179\\
\cmidrule{2-7}
&0.6&494.9&0.744&0.8&1625&1.406\\

\midrule

\multirow{2.5}{*}{$(256,8)$}&0.2&17.84&0.024&0.4&119.0&0.167\\
\cmidrule{2-7}
&0.6&419.2&0.694&0.8&1237&1.290\\

\midrule
\multirow{2.5}{*}{$(256,16)$}&0.2&16.21&0.024&0.4&102.3&0.161\\
\cmidrule{2-7}
&0.6&375.3&0.669&0.8&960.4&1.209\\

\bottomrule
\end{tabular}

\begin{tablenotes}
    \item[*] \emph{Time} refers to running time per operation.\\
    \item[**] \emph{NR} refers to number of rewirings per operation.
\end{tablenotes}

\end{threeparttable}
\end{center}
\end{table}

\begin{table}[!t]
\caption{Real-time ToE evaluation results under different \(n\)'s.}\label{tab5}
\begin{center}
\begin{threeparttable}

\begin{tabular}{c|ccc|ccc}
\toprule

\multirow{2}{*}{$(n,c)$}&\multirow{2}{*}{$l$}&Time\tnote{*}&\multirow{2}{*}{NR\tnote{**}}&\multirow{2}{*}{$l$}&Time&\multirow{2}{*}{NR}\\
&&(ns)&&&(ns)&\\

\midrule[0.8pt]

\multirow{2.5}{*}{$(128,8)$}&0.2&16.80&0.022&0.4&106.4&0.178\\
\cmidrule{2-7}
&0.6&380.1&0.717&0.8&1077&1.313\\

\midrule
\multirow{2.5}{*}{$(256,8)$}&0.2&17.84&0.024&0.4&119.0&0.167\\
\cmidrule{2-7}
&0.6&419.2&0.694&0.8&1237&1.290\\

\midrule
\multirow{2.5}{*}{$(384,8)$}&0.2&17.97&0.024&0.4&130.4&0.162\\
\cmidrule{2-7}
&0.6&455.9&0.685&0.8&1373&1.275\\

\bottomrule
\end{tabular}

\begin{tablenotes}
    \item[*] \emph{Time} refers to running time per operation.\\
    \item[**] \emph{NR} refers to number of rewirings per operation.
\end{tablenotes}

\end{threeparttable}
\end{center}
\end{table}

The results are shown in Table~\ref{tab4} and Table~\ref{tab5}. When $l=0.2$, our algorithm only requires less than 20 nanoseconds to process an operation of atomic logical topology change. However, this number increases considerably as $l$ increases. When $l=0.8$, it takes an average of about 1000 nanoseconds. The trend of the number of rewirings is similar: when $l=0.2$, it only requires around 0.023 rearrangements per operation, and when $l=0.8$, this number increases to more than 1. It should be noted that the number of rewirings per operation when $l=0.2$ is much lower than 1, showing that our algorithm does a good job maintaining redundant connections. For other parameters, from Table~\ref{tab4} we can see that as $c$ increases, both the running time per operation and the number of rewirings per operation decrease. From Table~\ref{tab5} we can see that as $n$ increases, the running time per operation increases, while the number of rewirings per operation decreases, but both differences are not significant.

\section{Related Work}\label{relwork}

In fact, the replacement chain is not a completely new concept. Paull proposed a similar algorithm with linear time complexity, but it was limited to replacing between two intermediate switches in networks with link capacities equal to 1 \cite{paull1962reswitching}. In this paper, we generalize the algorithm to networks with integer link capacities greater than 1 (in Section~\ref{resttwo}) while still maintaining linear time complexity. We also generalize the algorithm to replacing between multiple intermediate switches/OCes to reduce the number of rewirings, and experimentally show that the efficiency of this algorithm in practice is acceptable, although no strict upper bound on the time complexity is obtained. There have been studies on the upper bound of the number of rewirings of Paull's algorithm, such as the work of Satoru Ohta \cite{ohta2020number,ohta2021number} and the work of Frank K. Hwang and Wen-Dar Lin \cite{hwang1998number}, which may be helpful in proving a strict upper bound on the time complexity for this algorithm.

For the minimum rewiring problem, Google \cite{zhao2019minimal, Wang_Zhang_Li_2022} and Shuyuan Zhang et al. \cite{zhang2023reducing} have published some algorithms. These algorithms have been briefly described in Section~\ref{evalalg}. However, the biggest problem with these algorithms is that they all start by modeling the problem as an off-the-shelf ILP or MCF model. This idea limits the algorithm to only solving the complete problem, and it is impossible to adjust only the local part of the matching scheme, making it unsuitable for small logical topology changes. Moreover, this idea does not fully utilize the characteristics of the problem itself. In contrast, our algorithm starts from the actual form of the problem, fully utilizes the characteristics of the problem itself, and divides the solution of the problem into the solution of multiple local atomic problems, making it suitable for small logical topology changes, meanwhile supporting real-time ToE which none of the works solved before.

\section{Conclusion and Future Work}\label{conclusion}

In this paper, we propose the novel \emph{bidirectional model} for the ToE problem for the first time, together with the \emph{FastReChain} algorithm based on the model. Experiments show that the algorithm not only bypassed the network adaptation loss between the bidirectional and traditional model incurred in previous works, but also completely outperformed prior algorithms across all metrics in medium- to large-scale clusters, even in the traditional model. Additionally, we also achieved real-time ToE for the first time for potentially future optical data centers, although there is no practical application of this type of task to date.

It is worth noting that the superior performance of \emph{FastReChain} is not unfounded, but based on our modeling design, solid theoretical analysis, and sophisticated algorithmic ideas. First, we provide rigorous proof that the algorithm is always capable of finding feasible solutions in proportional models. Second, we conducted careful analysis, speculation, and observation about the efficiency of the algorithm, and consequently propose the bitset optimization in Section~\ref{bsopt} to massively speed up the solving efficiency, and the gradual limitation of search breadth in Section~\ref{adap} to add a layer of protection for extreme cases.

Nonetheless, we still have a lot of ideas in further improving the algorithm, and we leave them for future works:

\textcircled{1} Since each replacement chain only affects a small number of connections, \textbf{this algorithm has great potential for parallel acceleration}. We believe this is especially important in real-time ToE where time efficiency is extremely critical.

\textcircled{2} The preformance of the algorithm greatly relies on the bitset optimization, but bitsets typically only supports operations on 64 bits at a time. In larger clusters, e.g. thousands of switches, we might need the support for more bits at a time using technologies such as SIMD.

\textcircled{3} There might be room of improvements in terms of the algorithm itself, e.g. achieving better robustness or rewiring ratio by involving the 3\textsuperscript{rd} case mentioned in \ref{formrc}; pruning useless searches; or a better refining strategy than that in Section~\ref{refinedver}.




\section*{Availability}

The implementation and evaluation code of all the algorithms mentioned above is available in the C++ language as an open source project in \cite{fastrechaingithub}. We aim to develop a high-quality open source library of ToE algorithms for the development of future network architectures.

\bibliographystyle{IEEEtran}
\bibliography{main}

\clearpage

\appendices

\section{An Example of Bipartite Divide-and-Conquer Loss}\label{biploss}
The following is an example of bipartite divide-and-conquer loss, in a traditional model with $n=4$, $m=4$, and $C_{*,*}=1$. The matrices in the first rows of each scheme are $X_{0},X_{1},X_{2},X_{3}$ (or $Y_{0},\ldots$), and the matrices in the second rows are the element-wise sums of the above 2 matrices. NR represents \textit{number of rewirings}.

We can see that in the first bipartition step (the second row), the matching schemes are optimal in terms of NR (with \(\mathrm{NR}=8\)). However, the optimality is not further guaranteed in the final step (the first row).
\begin{center}

Target logical topology:
$\begin{bmatrix}
    \colorbox{danhong}{1}\colorbox{danhong}{1}\colorbox{danhong}{1}\colorbox{danhong}{1}\\
    \colorbox{danhong}{1}\colorbox{danhong}{1}\colorbox{danhong}{1}\colorbox{danhong}{1}\\
    \colorbox{danhong}{1}\colorbox{danhong}{1}\colorbox{danhong}{1}\colorbox{danhong}{1}\\
    \colorbox{danhong}{1}\colorbox{danhong}{1}\colorbox{danhong}{1}\colorbox{danhong}{1}
\end{bmatrix}$

\ \\

Initial matching scheme:
$\begin{bmatrix}
    \colorbox{danlan}{0}\colorbox{danhong}{1}\colorbox{danlan}{0}\colorbox{danlan}{0}\\
    \colorbox{danlan}{0}\colorbox{danlan}{0}\colorbox{danhong}{1}\colorbox{danlan}{0}\\
    \colorbox{danhong}{1}\colorbox{danlan}{0}\colorbox{danlan}{0}\colorbox{danlan}{0}\\
    \colorbox{danlan}{0}\colorbox{danlan}{0}\colorbox{danlan}{0}\colorbox{danhong}{1}
\end{bmatrix}
\begin{bmatrix}
    \colorbox{danlan}{0}\colorbox{danlan}{0}\colorbox{danlan}{0}\colorbox{danhong}{1}\\
    \colorbox{danlan}{0}\colorbox{danhong}{1}\colorbox{danlan}{0}\colorbox{danlan}{0}\\
    \colorbox{danlan}{0}\colorbox{danlan}{0}\colorbox{danhong}{1}\colorbox{danlan}{0}\\
    \colorbox{danhong}{1}\colorbox{danlan}{0}\colorbox{danlan}{0}\colorbox{danlan}{0}
\end{bmatrix}
\begin{bmatrix}
    \colorbox{danlan}{0}\colorbox{danlan}{0}\colorbox{danhong}{1}\colorbox{danlan}{0}\\
    \colorbox{danlan}{0}\colorbox{danlan}{0}\colorbox{danlan}{0}\colorbox{danhong}{1}\\
    \colorbox{danhong}{1}\colorbox{danlan}{0}\colorbox{danlan}{0}\colorbox{danlan}{0}\\
    \colorbox{danlan}{0}\colorbox{danhong}{1}\colorbox{danlan}{0}\colorbox{danlan}{0}
\end{bmatrix}
\begin{bmatrix}
    \colorbox{danlan}{0}\colorbox{danhong}{1}\colorbox{danlan}{0}\colorbox{danlan}{0}\\
    \colorbox{danhong}{1}\colorbox{danlan}{0}\colorbox{danlan}{0}\colorbox{danlan}{0}\\
    \colorbox{danlan}{0}\colorbox{danlan}{0}\colorbox{danhong}{1}\colorbox{danlan}{0}\\
    \colorbox{danlan}{0}\colorbox{danlan}{0}\colorbox{danlan}{0}\colorbox{danhong}{1}
\end{bmatrix}$

$\searrow+\swarrow$\ \ \ \ \ \ \ \ \ \ \ \ \ \ \ \ \ \ \ \ \ \ \ \ $\searrow+\swarrow$

$\begin{bmatrix}
    \colorbox{danlan}{0}\colorbox{danhong}{1}\colorbox{danlan}{0}\colorbox{danhong}{1}\\
    \colorbox{danlan}{0}\colorbox{danhong}{1}\colorbox{danhong}{1}\colorbox{danlan}{0}\\
    \colorbox{danhong}{1}\colorbox{danlan}{0}\colorbox{danhong}{1}\colorbox{danlan}{0}\\
    \colorbox{danhong}{1}\colorbox{danlan}{0}\colorbox{danlan}{0}\colorbox{danhong}{1}
\end{bmatrix}
\ \ \ \ \ \ \ \ \ \ \ \ \ \ \ \ 
\begin{bmatrix}
    \colorbox{danlan}{0}\colorbox{danhong}{1}\colorbox{danhong}{1}\colorbox{danlan}{0}\\
    \colorbox{danhong}{1}\colorbox{danlan}{0}\colorbox{danlan}{0}\colorbox{danhong}{1}\\
    \colorbox{danhong}{1}\colorbox{danlan}{0}\colorbox{danhong}{1}\colorbox{danlan}{0}\\
    \colorbox{danlan}{0}\colorbox{danhong}{1}\colorbox{danlan}{0}\colorbox{danhong}{1}
\end{bmatrix}$

\ \\

Optimal matching scheme ($\mathrm{NR}=8$):
$\begin{bmatrix}
    {\color{red}\colorbox{danhong}{\textbf{1}}}{\color{red}\colorbox{danlan}{\textbf{0}}}\colorbox{danlan}{0}\colorbox{danlan}{0}\\
    \colorbox{danlan}{0}\colorbox{danlan}{0}\colorbox{danhong}{1}\colorbox{danlan}{0}\\
    {\color{red}\colorbox{danlan}{\textbf{0}}}{\color{red}\colorbox{danhong}{\textbf{1}}}\colorbox{danlan}{0}\colorbox{danlan}{0}\\
    \colorbox{danlan}{0}\colorbox{danlan}{0}\colorbox{danlan}{0}\colorbox{danhong}{1}
\end{bmatrix}
\begin{bmatrix}
    \colorbox{danlan}{0}\colorbox{danlan}{0}\colorbox{danlan}{0}\colorbox{danhong}{1}\\
    \colorbox{danlan}{0}\colorbox{danhong}{1}\colorbox{danlan}{0}\colorbox{danlan}{0}\\
    \colorbox{danlan}{0}\colorbox{danlan}{0}\colorbox{danhong}{1}\colorbox{danlan}{0}\\
    \colorbox{danhong}{1}\colorbox{danlan}{0}\colorbox{danlan}{0}\colorbox{danlan}{0}
\end{bmatrix}
\begin{bmatrix}
    \colorbox{danlan}{0}\colorbox{danlan}{0}\colorbox{danhong}{1}\colorbox{danlan}{0}\\
    \colorbox{danlan}{0}\colorbox{danlan}{0}\colorbox{danlan}{0}\colorbox{danhong}{1}\\
    \colorbox{danhong}{1}\colorbox{danlan}{0}\colorbox{danlan}{0}\colorbox{danlan}{0}\\
    \colorbox{danlan}{0}\colorbox{danhong}{1}\colorbox{danlan}{0}\colorbox{danlan}{0}
\end{bmatrix}
\begin{bmatrix}
    \colorbox{danlan}{0}\colorbox{danhong}{1}\colorbox{danlan}{0}\colorbox{danlan}{0}\\
    \colorbox{danhong}{1}\colorbox{danlan}{0}\colorbox{danlan}{0}\colorbox{danlan}{0}\\
    \colorbox{danlan}{0}\colorbox{danlan}{0}{\color{red}\colorbox{danlan}{\textbf{0}}}{\color{red}\colorbox{danhong}{\textbf{1}}}\\
    \colorbox{danlan}{0}\colorbox{danlan}{0}{\color{red}\colorbox{danhong}{\textbf{1}}}{\color{red}\colorbox{danlan}{\textbf{0}}}
\end{bmatrix}$

$\searrow+\swarrow$\ \ \ \ \ \ \ \ \ \ \ \ \ \ \ \ \ \ \ \ \ \ \ \ $\searrow+\swarrow$

$\begin{bmatrix}
    {\color{red}\colorbox{danhong}{\textbf{1}}}{\color{red}\colorbox{danlan}{\textbf{0}}}\colorbox{danlan}{0}\colorbox{danhong}{1}\\
    \colorbox{danlan}{0}\colorbox{danhong}{1}\colorbox{danhong}{1}\colorbox{danlan}{0}\\
    {\color{red}\colorbox{danlan}{\textbf{0}}}{\color{red}\colorbox{danhong}{\textbf{1}}}\colorbox{danhong}{1}\colorbox{danlan}{0}\\
    \colorbox{danhong}{1}\colorbox{danlan}{0}\colorbox{danlan}{0}\colorbox{danhong}{1}
\end{bmatrix}
\ \ \ \ \ \ \ \ \ \ \ \ \ \ \ \ 
\begin{bmatrix}
    \colorbox{danlan}{0}\colorbox{danhong}{1}\colorbox{danhong}{1}\colorbox{danlan}{0}\\
    \colorbox{danhong}{1}\colorbox{danlan}{0}\colorbox{danlan}{0}\colorbox{danhong}{1}\\
    \colorbox{danhong}{1}\colorbox{danlan}{0}{\color{red}\colorbox{danlan}{\textbf{0}}}{\color{red}\colorbox{danhong}{\textbf{1}}}\\
    \colorbox{danlan}{0}\colorbox{danhong}{1}{\color{red}\colorbox{danhong}{\textbf{1}}}{\color{red}\colorbox{danlan}{\textbf{0}}}
\end{bmatrix}$

\ \\

MCF matching scheme ($\mathrm{NR}=16$):
$\begin{bmatrix}
    \colorbox{danlan}{0}\colorbox{danhong}{1}\colorbox{danlan}{0}\colorbox{danlan}{0}\\
    \colorbox{danlan}{0}\colorbox{danlan}{0}\colorbox{danhong}{1}\colorbox{danlan}{0}\\
    {\color{red}\colorbox{danlan}{\textbf{0}}}\colorbox{danlan}{0}\colorbox{danlan}{0}{\color{red}\colorbox{danhong}{\textbf{1}}}\\
    {\color{red}\colorbox{danhong}{\textbf{1}}}\colorbox{danlan}{0}\colorbox{danlan}{0}{\color{red}\colorbox{danlan}{\textbf{0}}}
\end{bmatrix}
\begin{bmatrix}
    \colorbox{danlan}{0}\colorbox{danlan}{0}\colorbox{danlan}{0}\colorbox{danhong}{1}\\
    \colorbox{danlan}{0}\colorbox{danhong}{1}\colorbox{danlan}{0}\colorbox{danlan}{0}\\
    {\color{red}\colorbox{danhong}{\textbf{1}}}\colorbox{danlan}{0}{\color{red}\colorbox{danlan}{\textbf{0}}}\colorbox{danlan}{0}\\
    {\color{red}\colorbox{danlan}{\textbf{0}}}\colorbox{danlan}{0}{\color{red}\colorbox{danhong}{\textbf{1}}}\colorbox{danlan}{0}
\end{bmatrix}
\begin{bmatrix}
    {\color{red}\colorbox{danhong}{\textbf{1}}}\colorbox{danlan}{0}{\color{red}\colorbox{danlan}{\textbf{0}}}\colorbox{danlan}{0}\\
    \colorbox{danlan}{0}\colorbox{danlan}{0}\colorbox{danlan}{0}\colorbox{danhong}{1}\\
    {\color{red}\colorbox{danlan}{\textbf{0}}}\colorbox{danlan}{0}{\color{red}\colorbox{danhong}{\textbf{1}}}\colorbox{danlan}{0}\\
    \colorbox{danlan}{0}\colorbox{danhong}{1}\colorbox{danlan}{0}\colorbox{danlan}{0}
\end{bmatrix}
\begin{bmatrix}
    \colorbox{danlan}{0}{\color{red}\colorbox{danlan}{\textbf{0}}}{\color{red}\colorbox{danhong}{\textbf{1}}}\colorbox{danlan}{0}\\
    \colorbox{danhong}{1}\colorbox{danlan}{0}\colorbox{danlan}{0}\colorbox{danlan}{0}\\
    \colorbox{danlan}{0}{\color{red}\colorbox{danhong}{\textbf{1}}}{\color{red}\colorbox{danlan}{\textbf{0}}}\colorbox{danlan}{0}\\
    \colorbox{danlan}{0}\colorbox{danlan}{0}\colorbox{danlan}{0}\colorbox{danhong}{1}
\end{bmatrix}$

$\searrow+\swarrow$\ \ \ \ \ \ \ \ \ \ \ \ \ \ \ \ \ \ \ \ \ \ \ \ $\searrow+\swarrow$

$\begin{bmatrix}
    \colorbox{danlan}{0}\colorbox{danhong}{1}\colorbox{danlan}{0}\colorbox{danhong}{1}\\
    \colorbox{danlan}{0}\colorbox{danhong}{1}\colorbox{danhong}{1}\colorbox{danlan}{0}\\
    \colorbox{danhong}{1}\colorbox{danlan}{0}{\color{red}\colorbox{danlan}{\textbf{0}}}{\color{red}\colorbox{danhong}{\textbf{1}}}\\
    \colorbox{danhong}{1}\colorbox{danlan}{0}{\color{red}\colorbox{danhong}{\textbf{1}}}{\color{red}\colorbox{danlan}{\textbf{0}}}
\end{bmatrix}
\ \ \ \ \ \ \ \ \ \ \ \ \ \ \ \ 
\begin{bmatrix}
    {\color{red}\colorbox{danhong}{\textbf{1}}}{\color{red}\colorbox{danlan}{\textbf{0}}}\colorbox{danhong}{1}\colorbox{danlan}{0}\\
    \colorbox{danhong}{1}\colorbox{danlan}{0}\colorbox{danlan}{0}\colorbox{danhong}{1}\\
    {\color{red}\colorbox{danlan}{\textbf{0}}}{\color{red}\colorbox{danhong}{\textbf{1}}}\colorbox{danhong}{1}\colorbox{danlan}{0}\\
    \colorbox{danlan}{0}\colorbox{danhong}{1}\colorbox{danlan}{0}\colorbox{danhong}{1}
\end{bmatrix}$

\ \\

(Differences from the initial scheme are marked in {\color{red}\textbf{0}} or {\color{red}\textbf{1}})
\end{center}

\newpage
\section{Proof of the Logical Topology Conversion Algorithm}\label{nsaproof}

Since \(G_{j,k}\) is the remainder of \(C_{j,k}\) divided by \(2\), it can be represented as
\begin{equation}
G_{j,k}=D_{j,k}-2\left\lfloor\frac12D_{j,k}\right\rfloor
\end{equation}
Since all edges connected to each vertex in the graph are oriented in either direction, and the difference between the number of edges in the two directions does not exceed 1, this can be represented as
\begin{subequations}
\begin{align}
\forall j,&\left\lfloor\frac 12\sum_kG_{j,k}\right\rfloor\le\sum_kG'_{j,k}\le\left\lceil\frac 12\sum_kG_{j,k}\right\rceil\\
\forall k,&\left\lfloor\frac 12\sum_jG_{j,k}\right\rfloor\le\sum_jG'_{j,k}\le\left\lceil\frac 12\sum_jG_{j,k}\right\rceil
\end{align}
\end{subequations}
Then we have
\begin{equation}
\begin{aligned}
\sum_kD'_{j,k}&=\sum_k\left\lfloor\frac12D_{j,k}\right\rfloor+\sum_kG'_{j,k}\\
&\ge\sum_k\left\lfloor\frac12D_{j,k}\right\rfloor+\left\lfloor\frac 12\sum_kG_{j,k}\right\rfloor\\
&=\sum_k\left\lfloor\frac12D_{j,k}\right\rfloor+\left\lfloor\frac 12\sum_kD_{j,k}-\sum_k\left\lfloor\frac12D_{j,k}\right\rfloor\right\rfloor\\
&=\sum_k\left\lfloor\frac12D_{j,k}\right\rfloor+\left\lfloor\frac 12\sum_kD_{j,k}\right\rfloor-\sum_k\left\lfloor\frac12D_{j,k}\right\rfloor\\
&=\left\lfloor\frac 12\sum_kD_{j,k}\right\rfloor
\end{aligned}
\end{equation}
The same goes for other constraints.

\newpage
\onecolumn
\section{Pseudocode for Replacement Chain IDS}\label{rcids}

\begin{algorithm}
    \SetKwProg{Fn}{function}{:}{end}
    \caption{Replacement chain IDS.}
    
    \KwData{$m$, $n$, $C$, $D$, current scheme $X$, low-level switch id $j_0$ and $k_0$ between which the connection is to be scheduled}
    \KwResult{whether the connection is successfully scheduled. If true, $X$ becomes the new scheme}
    \BlankLine
    \begin{multicols}{2}
    
        \SetKwFunction{AddConnection}{AddConnection}
        \Fn{\AddConnection{$i$, $j$, $k$}} {
            $X_{i,j,k}\gets X_{i,j,k}+1$, $X_{i,k,j}\gets X_{i,k,j}+1$\;
        }
        \BlankLine
    
        \SetKwFunction{RemoveConnection}{RemoveConnection}
        \Fn{\RemoveConnection{$i$, $j$, $k$}} {
            $X_{i,j,k}\gets X_{i,j,k}-1$, $X_{i,k,j}\gets X_{i,k,j}-1$\;
        }
        \BlankLine
    
        \SetKwFunction{FindRedundantConnection}{FindRedundantConnection}
        \Fn{\FindRedundantConnection{$i$, $j$}} {
            \textbf{if} $\sum_{k=0}^{m-1}X_{i,j,k}<C_{i,j}$ \textbf{then}\\
            \quad\ \,\KwRet{$m$}\;
            \ForEach{$k\in\{0,1,\ldots,m-1\}$ in random order} {
                \textbf{if} $X_{i,j,k}>0\land E(X)_{j,k}>D_{j,k}$ \textbf{then}\\
                \quad\ \,\KwRet{$k$}\;
            }
            \KwRet{$-1$}\;
        }
        \BlankLine

        \SetKwFunction{DepthLimitedSearch}{DepthLimitedSearch}
        \SetKwFunction{DLSSubroutine}{DLSSubroutine}
        
        \Fn{\DLSSubroutine{$i$, $j$, $k$, $jConn$, $depth$}} {
            \textbf{if} $jConn=-1$ \textbf{then}\\
            \quad\ \,\KwRet{false}\;
            \textbf{if} $jConn<m$ \textbf{then}\\
            \quad\ \,\RemoveConnection{$i$, $j$, $jConn$}\;
            \ForEach{$l\in\{0,1,\ldots,m-1\}-\{j\}$ in random order \qquad\qquad} {
                \If{$X_{i,k,l}>0$} {
                    \RemoveConnection{$i$, $k$, $l$}\;
                    \textbf{if} \DepthLimitedSearch{$k$, $l$, $depth-1$} \textbf{then}\\
                    \quad\ \,\KwRet{true}\;
                    \AddConnection{$i$, $k$, $l$}\;
                }
            }
            \textbf{if} $jConn<m$ \textbf{then}\\
            \quad\ \,\AddConnection{$i$, $j$, $jConn$}\;
            \KwRet{false}\;
        }
        
        \Fn{\DepthLimitedSearch{$j$, $k$, $depth$}} {
            \eIf{$depth=0$} {
                \ForEach{$i\in\{0,1,\ldots,n-1\}$ in random order \qquad\qquad} {
                    $jConn\gets$ \FindRedundantConnection{$i$, $j$}\;
                    $kConn\gets$ \FindRedundantConnection{$i$, $k$}\;
                    \If{$jConn\ne-1\land kConn\ne-1$} {
                        \textbf{if} $jConn<m$ \textbf{then}\\
                        \quad\ \,\RemoveConnection{$i$, $j$, $jConn$}\;
                        \textbf{if} $kConn<m$ \textbf{then}\\
                        \quad\ \,\RemoveConnection{$i$, $k$, $kConn$}\;
                        \AddConnection{$i$, $j$, $k$}\;
                        \KwRet{true}\;
                    }
                }
            } {
                \ForEach{$i\in\{0,1,\ldots,n-1\}$ in random order \qquad\qquad} {
                    \AddConnection{$i$, $j$, $k$}\;
                    $jConn\gets$ \FindRedundantConnection{$i$, $j$}\;
                    $kConn\gets$ \FindRedundantConnection{$i$, $k$}\;
                    \textbf{if} \DLSSubroutine{$i$, $j$, $k$, $jConn$, $depth$} $\lor$ \DLSSubroutine{$i$, $k$, $j$, $kConn$, $depth$} \textbf{then}\\
                    \quad\ \,\KwRet{true}\;
                    \RemoveConnection{$i$, $j$, $k$}\;
                }
            }
            \KwRet{false}\;
        }
        \BlankLine

        \For{$depth\gets0$ \KwTo depth limit (e.g., $\sum_{j=0}^{m-1}W_{L_j}-1$)} {
            \textbf{if} \DepthLimitedSearch{$j_0$, $k_0$, $depth$} \textbf{then}\\
            \quad\ \,\KwRet{true}\;
        }
        \KwRet{false}\;
    \end{multicols}
    \BlankLine
\end{algorithm}

 





\end{document}